\title{Neighborhood inclusions for minimal dominating sets enumeration: linear and polynomial delay algorithms in $P_7$-free and $P_8$-free chordal graphs} 
\titlerunning{Neighborhood inclusions for minimal dominating sets enumeration}
\author{Oscar Defrain}{Universit\'e Clermont Auvergne, France. \and \url{oscar.defrain@uca.fr}}{}{}{}
\author{Lhouari Nourine}{Universit\'e Clermont Auvergne, France. \and \url{lhouari.nourine@uca.fr}}{}{}{}
\authorrunning{O. Defrain and L. Nourine}
\keywords{Minimal dominating sets, enumeration algorithms, linear delay enumeration, chordal graphs, forbidden induced paths.}
\def\D{\mathcal{D}} 
\def\DIR{DIR}
\newcommand{\intv}[2]{[#2]}
\begin{document}

\maketitle

\begin{abstract}
	In [M.~M.~Kant\'e, V.~Limouzy, A.~Mary, and L.~Nourine.~On the enumeration of minimal dominating sets and related notions.~SIAM Journal on Discrete Mathematics, 28(4):1916–1929, 2014.] the authors give an $O(n+m)$ delay algorithm based on neighborhood inclusions for the enumeration of minimal dominating sets in split and $P_6$-free chordal graphs.
	In this paper, we investigate generalizations of this technique to $P_k$-free chordal graphs for larger integers $k$.
	In particular, we give $O(n+m)$ and $O(n^3\cdot m)$ delays algorithms in the classes of $P_7$-free and $P_8$-free chordal graphs.
	As for $P_k$-free chordal graphs for $k\geq 9$, we give evidence that such a technique is inefficient as a key step of the algorithm, namely the irredundant extension problem, becomes {\sf NP}-complete.
\end{abstract}

\section{Introduction}

We consider the problem of enumerating all inclusion-wise minimal dominating sets of a given graph, denoted by \textsc{Dom-Enum}.
A {\em dominating set} in a graph $G$ is a set of vertices $D$ such that every vertex of $G$ is either in $D$ or is adjacent to some vertex of $D$. 
It is said to be {\em minimal} if it does not contain any dominating set as a proper subset.
To this date, it is open whether \textsc{Dom-Enum} admits an output-polynomial time algorithm.
An enumeration algorithm is said to be running in {\em output-polynomial} time if its running time is bounded by a polynomial in the combined size of the input and the output.
It is said to be running in {\em incremental-polynomial} time if the running times between two consecutive outputs and after the last output are bounded by a polynomial in the combined size of the input and already output solutions.
If the running times between two consecutive outputs and after the last output are bounded by a polynomial in the size of the input alone, then the algorithm is said to be running with {\em polynomial delay}; see~\cite{johnson1988generating,creignou2019complexity}.
Recently, it has been proved in~\cite{kante2014enumeration} that \textsc{Dom-Enum} is equivalent to the problem of enumerating all inclusion-wise minimal transversals of a hypergraph, denoted by \textsc{Trans-Enum}.
The best known algorithm for this problem is due to Fredman and Khachiyan~\cite{fredman1996complexity} and runs in incremental quasi-polynomial time.
Nevertheless, several classes of graphs were shown to admit output-polynomial time algorithms.
For example, it has been shown that there exist output-polynomial time algorithms for $\log(n)$-degenerate graphs~\cite{eiter2003new}, triangle-free graphs~\cite{bonamy2019enumerating}, and recently for $K_t$-free for any fixed $t\in \mathbb{N}$, diamond-free and paw-free graphs~\cite{bonamy2019enumeratingkt}.
Incremental-polynomial time algorithms are known for chordal bipartite graphs~\cite{golovach2016enumerating} and graphs of bounded conformality~\cite{boros2004generating}.
Polynomial-delay algorithms are known for degenerate graphs~\cite{eiter2003new}, line graphs~\cite{kante2015polynomial}, and chordal graphs~\cite{kante2015chordal}. 
Linear-delay algorithms are known for permutation and interval graphs~\cite{kante2012neighbourhood}, graphs with bounded clique width~\cite{courcelle2009linear}, split and $P_6$-free chordal graphs~\cite{kante2014enumeration}.

In this paper, we investigate the enumeration of minimal dominating sets from their intersection with redundant vertices, i.e., vertices that have an inclusion-wise non-minimal neighborhood in the graph. 
This technique was first introduced in~\cite{kante2014enumeration} for the enumeration of minimal dominating sets in split and $P_6$-free chordal graphs.
We investigate generalizations of this technique to $P_k$-free chordal graphs for larger integers $k$.
In particular, we give $O(n+m)$ and $O(n^3\cdot m)$ delays algorithms in the classes of $P_7$-free and $P_8$-free chordal graphs, where $n$ and $m$ respectively denote the number of vertices and edges in the graph.
Our algorithms rely on two main properties.
The first one is that the intersections of minimal dominating sets with redundant vertices form an independence system and an accessible set system in $P_7$-free and $P_8$-free chordal graphs.
The second is that the connected components obtained after removing redundant vertices in $P_7$-free and $P_8$-free chordal graphs are respectively $P_3$-free and $P_4$-free chordal.
As for $P_k$-free chordal graphs for $k\geq 9$, we give evidence that such a technique is inefficient as a key step of the algorithm, namely the irredundant extension problem, becomes {\sf NP}-complete.

The rest of the paper is organized as follows. 
In Section~\ref{sec:preliminaries} we introduce definitions and preliminary notions.
In Section~\ref{sec:algorithm} we describe the general algorithm that we consider throughout the paper and that can be decomposed into two distinct parts: redundant parts enumeration, and irredundant extensions enumeration.
In Section~\ref{sec:properties} we prove properties on chordal graphs that depend on the size of a longest induced path in the graph.
Section~\ref{sec:redundant} is devoted to the complexity analysis of the first part of the algorithm, while Section~\ref{sec:irredundant} consider the second.
We conclude in Section~\ref{sec:conclusion} by discussing the outlooks of such a technique.

\section{Preliminaries}\label{sec:preliminaries}

In this paper, all graphs are considered finite, undirected, simple, and loopless. 
For a graph $G=(V(G),E(G))$, $V(G)$ is its set of vertices and $E(G)\subseteq \{\{x,y\} \mid x,y\in V(G),\ x\neq y\}$ is its set of edges.
Edges may be denoted by $xy$ (or $yx$) instead of $\{x,y\}$.
Two vertices $x,y$ of $G$ are called {\em adjacent} if $xy\in E(G)$.
A {\em clique} in a graph $G$ is a set of pairwise adjacent vertices.
An {\em independent set} in a graph $G$ is a set of pairwise non-adjacent vertices.
The subgraph of $G$ {\em induced} by $X\subseteq V(G)$, denoted by $G[X]$, is the graph $(X,E\cap \{\{x,y\} \mid x,y\in X,\ x\neq y\})$; $G-X$ is the graph $G[V(G)\setminus X]$.
An {\em induced path} (resp.~{\em induced cycle}) in $G$ is a path (resp.~cycle) that is an induced subgraph of~$G$.
We denote by $P_k$ an induced path on $k$ vertices.
We call {\em hole} (or {\em chordless cycle}) an induced cycle of size at least four.
A~graph $G$ is {\em split} if its vertex set can be partitioned into a clique and an independent set.
It~is {\em chordal} if it has no chordless cycle.
It~is called $P_k$-free if it has no induced path on $k$ vertices.

Let $G$ be a graph and $x\in V(G)$ be a vertex of $G$.
The {\em neighborhood} of $x$ is the set $N(x)=\{y\in V(G) \mid xy\in E(G)\}$.
The {\em closed neighborhood} of $x$ is the set $N[x]= N(x)\cup\{x\}$.
For a subset $X\subseteq V(G)$ we define $N[X]=\bigcup_{x\in X} N[x]$ and $N(X)=N[X]\setminus X$.
In case of ambiguity or when several graphs are considered, we shall note $N_G[x]$ the neighborhood of $x$ in $G$.
The {\em degree} of $x$ is defined by $deg(x)=|N(x)|$.
We say that $x$ is {\em complete} to $X$ if $X\subseteq N(x)$, and that it is {\em partially adjacent} to $X$ if it is adjacent to an element of $X$ but not complete to $X$.
Let $D,X\subseteq V(G)$ be two subsets of vertices of $G$.
We say that $D$ {\em dominates} $X$ if $X\subseteq N[D]$.
It is inclusion-wise {\em minimal} if $X\not\subseteq N[D\setminus \{x\}]$ for any $x\in D$.
We say that $D$ dominates $x$ if it dominates $\{x\}$.
A (minimal) {\em dominating set} of $G$ is a (minimal) dominating set of $V(G)$.
The set of all minimal dominating sets of $G$ is denoted by $\D(G)$, and the problem of enumerating $\D(G)$ given $G$ by \textsc{Dom-Enum}.
Let $x$ be a vertex of~$D$.
A~{\em private neighbor} of $x$ w.r.t.~$D$ in $G$ is a vertex $u$ of $G$ that is only adjacent to $x$ in~$D$, that is, such that $N[u]\cap D=\{x\}$. 
Note that $x$ can be its own private neighbor (in that case we say that $x$ is {\em self-private}).
The set of all private neighbors of $x$ w.r.t.~$D$ is denoted by $Priv(D,x)$.
It is well known that a subset $D\subseteq V(G)$ is a minimal dominating set of $G$ if and only if it dominates $G$, and for every $x\in D$, $Priv(D,x)\neq \emptyset$.

Let $x$ be a vertex of $G$.
We say that $x$ is {\em irredundant} if it is minimal with respect to neighborhood inclusion.
In case of equality between minimal neighborhoods, exactly one vertex is considered as irredundant.
We say that $x$ is {\em redundant} if it is not irredundant.
Then to every redundant vertex $y$ corresponds at least one irredundant vertex $x$ such that $N[x]\subseteq N[y]$, and no vertex $y$ is such that $N[y]\subset N[x]$ whenever $x$ is irredundant.
The set of irredundant vertices of $G$ is denoted by $IR(G)$, and the set of redundant vertices by $RN(G)$.
We call {\em irredundant component} a connected component of $G[IR(G)]$.
For a subset $D$ of vertices of $G$ we note $D_{RN}=D\cap RN(G)$ its intersection with redundant vertices, and $D_{IR}=D\cap IR(G)$ its intersection with irredundant vertices.
Then $D_{RN}$ and $D_{IR}$ form a bipartition of $D$.
For a subset $D$ and a vertex $x\in D$, we call irredundant private neighbors of $x$ w.r.t.~$D$ the elements of the set $Priv_{IR}(D,x)=Priv(D,x)\cap IR(G)$.
In the remaining of the paper we shall note $\D_{RN}(G)=\{D_{RN} \mid D\in \D(G)\}$ and refer to this set as the {\em redundant parts} of minimal dominating sets of $G$.
We call {\em irredundant extension} of $A\in \D_{RN}(G)$ a set $I\subseteq IR(G)$ such that $A\cup I\in \D(G)$, and note $\DIR(A)$ the set of all such sets.
Observe that $|\D_{RN}(G)|\leq |\D(G)|$ and that this inequality might be sharp (take a star graph), or strict (take a path on six vertices).
We end the preliminaries stating general properties that will be used throughout the paper.

\begin{proposition}\label{prop:ACS-emptyset}
	Let $G$ be a graph.
	Then $IR(G)$ dominates $G$, hence $\emptyset\in \D_{RN}(G)$.
\end{proposition}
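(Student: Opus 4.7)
The plan is to establish the first assertion (that $IR(G)$ dominates $G$) directly from the definition of irredundance, and then extract a minimal dominating set from within $IR(G)$ to obtain the second assertion.

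For the first part, I would proceed by cases. Every irredundant vertex $y\in IR(G)$ dominates itself, so the only thing to verify is that each redundant vertex $y\in RN(G)$ is dominated by some element of $IR(G)$. By the paragraph preceding the proposition, any such $y$ admits an irredundant vertex $x\in IR(G)$ with $N[x]\subseteq N[y]$. Since $x\in N[x]$, this gives $x\in N[y]$, which (by symmetry of the closed-neighborhood relation between distinct vertices) is equivalent to $y\in N[x]$; note that $x\neq y$ here since $y$ is redundant while $x$ is irredundant, so indeed $xy\in E(G)$ and $y\in N[x]\subseteq N[IR(G)]$. Therefore $V(G)\subseteq N[IR(G)]$ and $IR(G)$ is a dominating set.

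For the second part, I would invoke the standard fact that every dominating set contains an inclusion-wise minimal one: starting from $IR(G)$ and greedily removing vertices whose removal preserves the dominating property, we obtain some $D\in \D(G)$ with $D\subseteq IR(G)$. Then $D_{RN}=D\cap RN(G)=\emptyset$, so by definition $\emptyset \in \D_{RN}(G)$.

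There is no substantive obstacle here; the argument is a two-line consequence of the definitions. The only point requiring a bit of care is the symmetry step $x\in N[y]\Leftrightarrow y\in N[x]$ for distinct vertices, and ensuring we invoke the fact that $x\neq y$ whenever $x$ is irredundant and $y$ redundant (so that $xy$ is a genuine edge rather than a trivial self-loop, which would not be a valid edge in the simple, loopless graphs considered in the paper).
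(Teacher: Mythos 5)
Your proof is correct and follows essentially the same route as the paper: a case split on whether a vertex is irredundant, using the existence of an irredundant $x$ with $N[x]\subseteq N[y]$ for each redundant $y$, followed by extracting a minimal dominating set inside $IR(G)$. The extra care you take with the symmetry step $x\in N[y]\Leftrightarrow y\in N[x]$ and $x\neq y$ is a harmless elaboration of what the paper leaves implicit.
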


\begin{proof}
	Take any vertex $x$ of $G$.
	Either it is irredundant, or not.
	If it is then it is dominated by $IR(G)$.
	If not then by definition there exists $y\in IR(G)$ such that $N[y]\subseteq N[x]$, and it is dominated by $IR(G)$.
	Consequently, $IR(G)$ dominates $G$ and thus there exists $D\subseteq IR(G)$ such that $D\in \D(G)$ and $D_{RN}=\emptyset$.
	Hence $\emptyset\in \D_{RN}(G)$.
\end{proof}

\begin{proposition}\label{prop:IR-private}
	Let $G$ be a graph and $D\subseteq V(G)$.
	Then $D$ is a minimal dominating set of $G$ if and only if it dominates $IR(G)$ and $Priv_{IR}(D,x)\neq\emptyset$ for every $x\in D$. 
\end{proposition}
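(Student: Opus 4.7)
The plan is to reduce both directions to the standard characterization recalled in the preliminaries: a set $D$ is a minimal dominating set of $G$ if and only if it dominates $G$ and every $x\in D$ has a (not necessarily irredundant) private neighbor. The goal is then to swap the role of ``$G$'' with ``$IR(G)$'' and the role of $Priv(D,x)$ with $Priv_{IR}(D,x)$, using only the definitional fact that every redundant vertex is dominated by some irredundant vertex with a smaller (or equal) closed neighborhood.

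For the forward direction, assume $D\in\D(G)$. Then $D$ dominates $G$, hence in particular $IR(G)$. Fix $x\in D$; by the standard characterization there exists $y\in Priv(D,x)$, i.e.\ $N[y]\cap D=\{x\}$. If $y\in IR(G)$ we are done. Otherwise $y$ is redundant, so by definition there exists $z\in IR(G)$ with $N[z]\subseteq N[y]$. This inclusion yields $N[z]\cap D\subseteq N[y]\cap D=\{x\}$. Since $D$ dominates $z$ we have $N[z]\cap D\neq\emptyset$, which forces $N[z]\cap D=\{x\}$. Hence $z\in Priv_{IR}(D,x)$, as wanted.

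For the backward direction, assume $D$ dominates $IR(G)$ and $Priv_{IR}(D,x)\neq\emptyset$ for every $x\in D$. To show that $D$ dominates $G$, pick any $v\in V(G)$. If $v\in IR(G)$, $v$ is dominated by assumption. If $v\in RN(G)$, pick an irredundant $z$ with $N[z]\subseteq N[v]$; then any $w\in D\cap N[z]$ (which exists since $D$ dominates $IR(G)$) also lies in $N[v]$, so $v$ is dominated. The private neighbor condition follows trivially from $Priv_{IR}(D,x)\subseteq Priv(D,x)$. Applying the standard characterization in the converse direction gives $D\in\D(G)$.

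There is no real obstacle; the only slightly delicate point is the forward direction, where one must check that ``privacy'' is inherited by the irredundant witness $z$ of $y$, and this follows automatically from the chain $N[z]\cap D\subseteq N[y]\cap D=\{x\}$ together with the fact that $D$ must dominate the irredundant vertex $z$ to prevent the left-hand side from being empty.
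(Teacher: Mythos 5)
Your proof is correct. The backward direction is essentially identical to the paper's. The forward direction, however, is organized quite differently: the paper argues by contradiction, assuming $Priv_{IR}(D,x)=\emptyset$ and then splitting into cases (whether $x$ is self-private, whether the irredundant witness $w$ of the redundant private neighbor equals $x$ or not) before deriving a contradiction in each branch. You instead give a direct replacement argument: take any private neighbor $y$ of $x$, and if $y$ is redundant, pass to an irredundant witness $z$ with $N[z]\subseteq N[y]$; the chain $N[z]\cap D\subseteq N[y]\cap D=\{x\}$ together with $N[z]\cap D\neq\emptyset$ (since $D$ dominates $z$) immediately gives $z\in Priv_{IR}(D,x)$. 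This handles the self-private case and the case $z=x$ uniformly, with no case analysis. Both proofs rest on the same key fact --- that a set meeting $N[z]$ must meet $N[y]$ whenever $N[z]\subseteq N[y]$ --- but your positive formulation is shorter and, in my view, cleaner; the paper's contradiction-based version buys nothing extra here.
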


\begin{proof}
	We prove the first implication.
	Let $D\in \D(G)$.
	Clearly $D$ dominates $IR(G)$. 
	Let us assume for contradiction that $Priv_{IR}(D,x)=\emptyset$ for some $x\in D$.
	We first exclude the case where $x$ is self-private.
	If $x$ is self-private then it is redundant and it has a neighbor $y\in IR(G)$ such that $N[y]\subseteq N[x]$.
	Since by hypothesis $Priv_{IR}(D,x)=\emptyset$, $y$ is dominated by some $z\in D$, $x\neq z$.
	However, since $N[y]\subseteq N[x]$ then $zx\in E(G)$ and $x$ is not self-private, a contradiction.
	Consequently $x$ has a neighbor $u\in D$, and a private neighbor $v$ in $RN(G)$.
	Let $w\in IR(G)$ such that $N[w]\subseteq N[v]$.
	Such a vertex exists since $v$ is redundant.
	Two cases arise depending on whether $w=x$ or $w\neq x$.
	In the first case we conclude that $uv\in E(G)$, hence that $v$ is not a private neighbor of $x$, a contradiction.
	In the other case, observe that since $w$ is irredundant it cannot be a private neighbor of $x$ (if ever it was adjacent to $x$).
	Hence it must be dominated by some $z\in D$, $z\neq x$ (possibly $z=w)$.
	Since $N[w]\subseteq N[v]$, $z$ is adjacent to $v$, hence $v$ is not a private neighbor of $x$, a contradiction.

	As for the other implication, observe that if an irredundant neighborhood $N[x]$, $x\in IR(G)$ is intersected by some set $D\subseteq V(G)$, then every neighborhood $N[y]$ such that $N[x]\subseteq N[y]$ is also intersected by $D$.
	Now if $D$ dominates $IR(G)$, then it intersects every irredundant neighborhood.
	As for every $y\in RN(G)$ there exists $x\in IR(G)$ such that $N[x]\subseteq N[y]$ we conclude that $D$ dominates $G$ whenever it dominates $IR(G)$.
	Minimality follows from the inclusion $Priv_{IR}(D,x)\subseteq Priv(D,x)$, recalling that a dominating set $D$ is minimal if and only if $Priv(D,x)\neq\emptyset$ for every $x\in D$.
\end{proof}

A corollary of Proposition~\ref{prop:IR-private} is the following, observing for $A\subseteq RN(G)$ and $I\subseteq IR(G)$ that if $I$ dominates $IR(G)\setminus N(A)$ but not $Priv_{IR}(A,a)$ for any $a\in A$, then $I$ can be arbitrarily reduced into a minimal such set.

\begin{corollary}\label{cor:IR-private}
	Let $G$ be a graph and $A\subseteq RN(G)$.
	Then $A\in \D_{RN}(G)$ if and only if every $a\in A$ has an irredundant private neighbor, and there exists $I\subseteq IR(G)$ such that $I$ dominates $IR(G)\setminus N(A)$ but not $Priv_{IR}(A,x)\neq\emptyset$ for any $a\in A$.
	Furthermore, $I\in DIR(A)$ whenever it is minimal with this property.
\end{corollary}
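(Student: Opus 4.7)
The plan is to derive the corollary by applying Proposition~\ref{prop:IR-private} to the set $D := A \cup I$. Since $A \subseteq RN(G)$ and $I \subseteq IR(G)$ are disjoint, they partition $D$, so the characterization of minimality translates directly into separate conditions on $A$ and $I$. Throughout, I read the hypothesis on $I$ as: $I$ dominates $IR(G) \setminus N(A)$ but does not dominate $Priv_{IR}(A,a)$ for any $a \in A$.

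For the forward direction, given $A \in \D_{RN}(G)$, pick $D \in \D(G)$ with $D_{RN} = A$ and set $I := D_{IR}$. By Proposition~\ref{prop:IR-private}, every $a \in A$ admits an irredundant private neighbor $w$ with $N[w] \cap D = \{a\}$. The restriction $N[w] \cap A = \{a\}$ yields $w \in Priv_{IR}(A, a)$ (so $a$ has an irredundant private neighbor), while $N[w] \cap I = \emptyset$ shows $I$ does not dominate $Priv_{IR}(A, a)$. Moreover, $D$ dominates $IR(G)$, and since $IR(G) \cap A = \emptyset$ we have $IR(G) \setminus N[A] = IR(G) \setminus N(A)$; hence $I$ dominates $IR(G) \setminus N(A)$.

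For the converse, assume the two conditions hold and let $I' \subseteq I$ be an inclusion-wise minimal subset still dominating $IR(G) \setminus N(A)$. Shrinking $I$ can only weaken domination, so $I'$ also fails to dominate $Priv_{IR}(A, a)$ for any $a \in A$. I verify $A \cup I' \in \D(G)$ via Proposition~\ref{prop:IR-private}. Domination of $IR(G)$ is immediate: $A$ covers $IR(G) \cap N(A)$ and $I'$ covers the complement. For each $a \in A$, pick any $w \in Priv_{IR}(A, a)$ not dominated by $I'$; then $N[w] \cap (A \cup I') = \{a\}$, so $w \in Priv_{IR}(A \cup I', a)$. For each $i \in I'$, minimality of $I'$ yields $u \in IR(G) \setminus N(A)$ with $N[u] \cap I' = \{i\}$; since $u \in IR(G)$ and $A \subseteq RN(G)$ force $u \notin A$, and $u \notin N(A)$ by choice, we obtain $u \notin N[A]$ and thus $N[u] \cap (A \cup I') = \{i\}$, making $u$ an irredundant private neighbor of $i$. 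This simultaneously establishes $A \in \D_{RN}(G)$ and $I' \in \DIR(A)$, proving the final assertion about minimal $I$.

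The main subtlety is the interplay between privacy computed with respect to $A$ alone and with respect to $A \cup I'$: the assumption that $I$ avoids dominating $Priv_{IR}(A, a)$ must produce witnesses that remain private in the augmented set, and the minimality-generated witnesses for vertices of $I'$ must not be spuriously dominated by $A$. Both are handled cleanly by the disjointness of $RN(G)$ and $IR(G)$ together with the observation $u \in IR(G) \setminus N(A) \Rightarrow u \notin N[A]$.
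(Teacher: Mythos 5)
Your proof is correct and follows exactly the route the paper intends: the paper gives no separate proof of this corollary, only the preceding one-sentence observation that one applies Proposition~\ref{prop:IR-private} to $D=A\cup I$ and reduces $I$ to a minimal set dominating $IR(G)\setminus N(A)$ (noting that shrinking $I$ preserves the non-domination of the sets $Priv_{IR}(A,a)$). You have simply filled in those details, including the correct reading of the garbled clause ``but not $Priv_{IR}(A,x)\neq\emptyset$'' as ``does not dominate $Priv_{IR}(A,a)$''.
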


\section{The algorithm}\label{sec:algorithm}

We describe a general algorithm enumerating the minimal dominating sets of a graph from their intersection with redundant vertices.
See Algorithm~\ref{algo:main}.
The first step is the enumeration of such intersections, Line~\ref{line:main-forallRN}. 
The second step is the enumeration of their irredundant extensions, Line~\ref{line:main-forallIR}.
The correctness of the algorithm follows from the bipartition induced by $RN(G)$ and $IR(G)$ in $G$.

The next sections are devoted to the complexity analysis of these two steps in the restricted case of $P_7$-free and $P_8$-free chordal graphs.

\begin{algorithm}
	\SetAlgoLined
	
	\SetKwProg{myproc}{Procedure}{}{}
	\myproc{{\em \texttt{DOM}($G$)}}{
		\For{{\bf all} $A\subseteq RN(G)$ {\bf such that} $A\in \D_{RN}(G)$\label{line:main-forallRN}}
		{	
			\For{{\bf all} $I\subseteq IR(G)$ {\bf such that} $I\in \DIR(A)$\label{line:main-forallIR}}
			{	
				{\bf output} $A\cup I$\;\label{line:main-output}
			}	
		}
	}
	\caption{An algorithm enumerating the minimal dominating sets of a graph $G$ from their intersection with the set $RN(G)$ of redundant vertices of $G$.}\label{algo:main}
\end{algorithm}

\section{Properties on $P_k$-free chordal graphs}\label{sec:properties}

We give structural properties on redundant vertices and irredundant components of $G$ whenever $G$ is chordal, and depending on the size of a longest induced path in $G$.

\begin{proposition}\label{prop:IR-path}
	Let $G$ be a graph and $u,v$ be two adjacent irredundant vertices of $G$. 
	Then there exist $u'\in N[u]\setminus N[v]$, $u''\in N[u']\setminus N[u]$, $v'\in N[v]\setminus N[u]$ and $v''\in N[v']\setminus N[v]$.
	In particular if $G$ is chordal, then $u''u'uvv'v''$ induces a $P_6$.
\end{proposition}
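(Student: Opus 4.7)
The plan is to handle the two assertions separately: first establish existence of $u', u'', v', v''$ using only the irredundancy of $u$ and $v$, then upgrade to the induced $P_6$ claim by repeatedly invoking chordality on shorter and shorter potential cycles.

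For the existence part, I would start from the definition: because $u$ and $v$ are both irredundant, neither $N[u] \subseteq N[v]$ nor $N[v] \subseteq N[u]$ can hold. Indeed, a strict inclusion would contradict the minimality of the larger neighborhood, and equality is ruled out by the tie-breaking convention (only one of two vertices with equal minimal closed neighborhoods is called irredundant). Hence I can pick $u' \in N[u] \setminus N[v]$ and $v' \in N[v] \setminus N[u]$; since $u$ is adjacent to $v$, these choices force $u' \neq u,v$ and $v' \neq u,v$, so the edges $uu'$ and $vv'$ are genuine. Then I apply the very same argument to $u'$: since $u$ is irredundant, $N[u']$ cannot be a proper subset of $N[u]$, and equality is impossible because $v \in N[u] \setminus N[u']$. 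This gives $u'' \in N[u'] \setminus N[u]$, and analogously a $v'' \in N[v'] \setminus N[v]$; the inclusions again force $u'' \neq u, u'$ and $v'' \neq v, v'$, so $u'u''$ and $v'v''$ are edges.

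For the induced $P_6$ claim, I observe that by construction the following non-edges are already free: $u'v$, $uv'$, $u''u$, and $vv''$. What remains is to exclude the six potential chords $u'v'$, $uv''$, $u''v$, $u''v'$, $u'v''$, $u''v''$. The strategy is a cascade: eliminate the shortest chords first, using chordality on $4$-cycles, and then feed the resulting non-edges into $5$- and $6$-cycle arguments. Concretely, the edge $u'v'$ would complete the chordless $4$-cycle $u'uvv'$, the edge $uv''$ the chordless $4$-cycle $uvv'v''$, and the edge $u''v$ the chordless $4$-cycle $u''u'uv$ — each already having been freed of chords at the previous step. Once these three non-edges are in hand, the edges $u''v'$ and $u'v''$ would form chordless $5$-cycles $u''u'uvv'$ and $u'uvv'v''$ respectively, and finally $u''v''$ would close a chordless $6$-cycle $u''u'uvv'v''$, contradicting the absence of holes in $G$.

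The main technical point is simply to set up the cascade in the right order, since each chord exclusion relies on the previous ones; nothing beyond careful bookkeeping is needed, and chordality is invoked only to forbid the chordless cycles that arise. The existence part, in turn, is really a one-line consequence of the minimality definition of irredundancy plus the tie-breaking convention, so there is no serious obstacle anywhere in the proof.
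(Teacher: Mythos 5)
Your proof is correct and follows essentially the same route as the paper: the existence of $u',u'',v',v''$ is derived exactly as in the paper's argument (strict inclusion contradicts irredundancy, equality is excluded by the tie-breaking convention resp.\ by $v\in N[u]\setminus N[u']$). The chord-exclusion cascade on $4$-, $5$- and $6$-cycles is a correct and welcome elaboration of the final step, which the paper merely asserts.
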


\begin{proof}
	Let us assume for contradiction that no such $u'$ exists.
	Then either $N[u]\subset N[v]$, or $N[u]=N[v]$.
	In the first case $v$ is redundant, a contradiction.
	In the other case only one of $u$ and $v$ should be irredundant by definition, a contradiction.
	Hence $u'$ exists. 
	By symmetry, $v'$ exists.
	Let us now assume for contradiction that no such $u''$ exists.
	Then either $N[u']\subset N[u]$, or $N[u']=N[u]$.
	In the first case $u$ is redundant, a contradiction.
	In the other case $vu'\in E(G)$, a contradiction.
	Hence $u''$ exists.
	By symmetry, $v''$ exists.
	Now if $G$ is chordal, $u''u'uvv'v'$ induces a $P_6$.
\end{proof}

An {\em accessible set system} is a family of sets in which every non-empty set $X$ contains an element $x$ such that $X\setminus\{x\}$ belongs to the family.
If $x$ is of largest index in $X$ such that $X\setminus\{x\}$ belongs to the family, then it is called {\em maximal generator} of $X$.
An \emph{independence system} is a family of sets such that for every non-empty set $X$ of the family, and every element $x\in X$, $X\setminus\{x\}$ belongs to the family.
In particular, every independence system is an accessible set system.
Note that the maximal generator of $X$ in that case is always the vertex of maximal index in $X$.
Accessible set systems and independence systems play an important role in the design of efficient enumeration algorithms~\cite{arimura2009polynomial,kante2014enumeration}. 
The next theorem suggests that the enumeration of $\D_{RN}(G)$ is tractable in $P_7$-free and $P_8$-free chordal graphs. 

\begin{figure}
	\center
	\caption{The situation of Proposition~\ref{prop:ASC-IS}, case one. Circles denote private neighborhoods.}\label{fig:P8ASC}
	\includegraphics{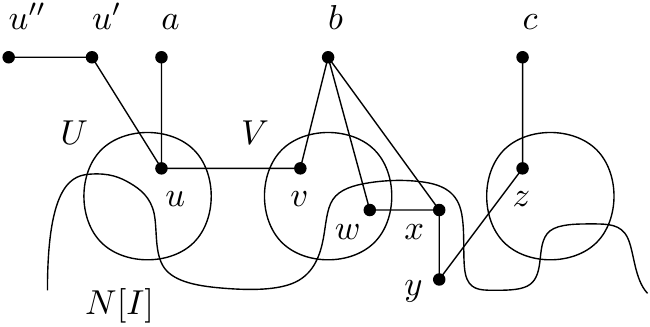}
\end{figure}

\begin{proposition}\label{prop:ASC-IS}
	Let $G$ be a chordal graph. 
	Then $\D_{RN}(G)$ is an independence system whenever $G$ is $P_7$-free, and it is an accessible set system whenever $G$ is $P_8$-free.
\end{proposition}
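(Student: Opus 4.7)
The plan is to apply Corollary~\ref{cor:IR-private} as the working characterization. Fix $A\in\D_{RN}(G)$ together with an extension $I\subseteq IR(G)$ witnessing $D:=A\cup I\in\D(G)$, and, for every $b\in A$, fix a distinguished irredundant private neighbor $p_b\in Priv_{IR}(D,b)$. For a chosen $a\in A$, write $A':=A\setminus\{a\}$; to prove $A'\in\D_{RN}(G)$ I would exhibit an irredundant extension $I'$ of $A'$. The natural candidate is $I':=I\cup Priv_{IR}(D,a)$, augmenting $I$ with the irredundant privates of $a$, which are precisely the irredundant vertices that lose their unique dominator upon removing $a$. One checks readily that $I'\subseteq IR(G)$ and that $I'$ dominates $IR(G)\setminus N(A')$. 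By Corollary~\ref{cor:IR-private}, the whole argument thus reduces to the following claim: no vertex of $I'\setminus I=Priv_{IR}(D,a)$ is adjacent to any $p_b$ with $b\in A'$.

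Suppose for the sake of contradiction that some $v\in Priv_{IR}(D,a)$ is adjacent to $p_b$ for some $b\in A'$. Then $\{a,v,p_b,b\}$ induces a $P_4$: the chord $ap_b$ is excluded because $p_b$ is a private of $b$, the chord $vb$ because $v$ is a private of $a$, and $ab\notin E(G)$ follows from chordality applied to the potential $4$-cycle $a\,v\,p_b\,b$. Since $v$ and $p_b$ are adjacent irredundant vertices, Proposition~\ref{prop:IR-path} applies, and the valid auxiliary choices $u':=a\in N[v]\setminus N[p_b]$ and $v':=b\in N[p_b]\setminus N[v]$ produce an induced $P_6$ of the form $u''\,a\,v\,p_b\,b\,v''$ with $u''\in N(a)\setminus N[v]$ and $v''\in N(b)\setminus N[p_b]$.

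For the independence property in $P_7$-free chordal graphs, I would show that this $P_6$ can always be extended by one more vertex into a forbidden induced $P_7$, yielding the desired contradiction. The additional vertex is extracted from the local structure around $a$ (or symmetrically $b$): either a second irredundant private of $a$ not adjacent to $v$, or the irredundant witness $x_a\in IR(G)$ of $a$'s redundancy (which exists because $a$ is redundant and satisfies $N[x_a]\subseteq N[a]$). A chordality-based case analysis ensures that one of these candidates can be appended at the $u''$ end (or at the $v''$ end) without creating a chord, producing the induced $P_7$. Since the argument applies uniformly to every $a\in A$, we conclude that $\D_{RN}(G)$ is an independence system. For the accessible property in $P_8$-free chordal graphs, one instead selects $a$ as the vertex of $A$ of maximum index under a fixed linear order of $V(G)$; this specific choice permits a symmetric extension on both ends of the $P_6$, producing a forbidden induced $P_8$ and hence the contradiction needed to deduce that $A\setminus\{a\}\in\D_{RN}(G)$.

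The main obstacle, as the above indicates, is the extension step: lengthening the $P_6$ supplied by Proposition~\ref{prop:IR-path} into a forbidden $P_7$ (resp.~$P_8$) using only the local structure of privates and redundancy witnesses, while ruling out unwanted chords via chordality. The accessible case in $P_8$-free chordal graphs is the more delicate one, because the extension must succeed on both ends of the $P_6$ simultaneously, and this is precisely what forces the particular choice of the extreme element $a$.
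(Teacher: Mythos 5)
Your opening moves match the paper's: take $I\in \DIR(A)$, form $I'=I\cup Priv_{IR}(A,a)$, note that it dominates $IR(G)\setminus N(A\setminus\{a\})$, and try to contradict Corollary~\ref{cor:IR-private}. But the reduction you make next is not valid, and the path-extension step that you defer as ``the main obstacle'' is where the entire proof lives --- and your plan for it cannot work. The claim ``no vertex of $Priv_{IR}(D,a)$ is adjacent to any $p_b$'' is much stronger than what is needed and is not true in general: in a $P_7$-free chordal graph the irredundant components are cliques (Proposition~\ref{prop:Pk-Pk-4}), so privates of two distinct redundant vertices $a,b$ can perfectly well lie in the same irredundant clique, hence be adjacent, while $A\setminus\{a\}$ still belongs to $\D_{RN}(G)$. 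What the failure of $A\setminus\{a\}\in\D_{RN}(G)$ actually yields --- and what the paper's proof exploits --- is that $I'$ dominates \emph{all} of $Priv_{IR}(A',b)$ for some $b\in A'$, i.e., that $U=Priv_{IR}(A,a)\setminus N[I]$ dominates the entire set $V=Priv_{IR}(A,b)\setminus N[I]$. This full-domination property is indispensable later: it is what forces the redundancy witness $w$ of $b$ (with $N[w]\subseteq N[b]$) to avoid $V$ and hence to lie in $N[I]$ in one of the cases. A single adjacency $vp_b\in E(G)$ carries none of that information, so no forbidden path can be forced from it.

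The decisive structural error is in the extension itself. Your $P_6$ is $u''\,a\,v\,p_b\,b\,v''$ with $a$ and $b$ as \emph{internal} vertices, and every candidate you propose for a seventh vertex --- a second irredundant private of $a$, or the witness $x_a$ of $a$'s redundancy with $N[x_a]\subseteq N[a]$ --- is adjacent to $a$. A vertex adjacent to an internal vertex of the path can never be appended to either end to form an induced $P_7$, so no ``chordality-based case analysis'' can rescue this. The paper's extension works past $b$ and uses entirely different material: the witness $w$ with $N[w]\subseteq N[b]$, and then either a vertex $x\in I$ adjacent to $w$ (hence to $b$) together with a private neighbor $y$ of $x$, or another element $c\in A$ adjacent to $w$ (hence to $b$) together with a private neighbor $z$ of $c$. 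Likewise, the two extra vertices needed for the accessibility/$P_8$ part come from the hypothesis that $A\setminus\{c\}\not\in\D_{RN}(G)$ for \emph{every} $c\in A$, not from selecting $a$ of maximum index; that choice plays no role in the paper's argument and does not, by itself, enable any ``symmetric extension on both ends.''
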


\begin{proof}
	Let $G$ be a chordal graph.
	We first assume that $\D_{RN}(G)$ is not an independence system to exhibit a $P_7$, and then assume that $\D_{RN}(G)$ is not an accessible system to exhibit a $P_8$.
	So suppose that $\D_{RN}(G)$ is not an independence system and let $A\in\D_{RN}(G)$ and $a\in A$ such that $A\setminus \{a\}\not\in \D_{RN}(G)$.
	By Proposition~\ref{prop:ACS-emptyset}, $|A|\geq 2$.
	Let $I\in DIR(A)$.
	Clearly $Priv_{IR}(A,a)\not\subseteq I$.
	Let $A'=A\setminus \{a\}$ and $I'=I\cup Priv_{IR}(A,a)$.
	Then $I'$ dominates $IR(G)\setminus N(A')$.
	By Corollary~\ref{cor:IR-private} there must be some $b\in A'$ such that $I'$ dominates $Priv_{IR}(A',b)$, hence $Priv_{IR}(A,b)$ as $Priv_{IR}(A,b)\subseteq Priv_{IR}(A',b)$.
	Let $b$ be one such vertex. 
	We put $U=Priv_{IR}(A,a)\setminus N[I]$ and $V=Priv_{IR}(A,b)\setminus N[I]$.
	Then neither of $U$ nor $V$ is empty, $U\cap V=\emptyset$, and $U$ dominates~$V$.
	Let $u\in U$ and $v\in V$ be such that $uv\in E(G)$ (such $u$ and $v$ exist since $U$ dominates $V$).
	Since $u$ and $v$ are private neighbors of $a$ and $b$, $av,bu\not\in E(G)$.
	Since $G$ is chordal, $ab\not\in E(G)$.
	Then $auvb$ induces a $P_4$.
	By Proposition~\ref{prop:IR-path} since $u,v$ are irredundant, there exists $u''$ and $u'$ such that $u''u'uvb$ induces a $P_5$.
	Consider an irredundant vertex $w$ such that $N[w]\subseteq N[b]$.
	Such a vertex exists since~$b$ is redundant.
	Two cases arise depending on whether $w\in Priv_{IR}(A,b)$ or $w\not\in Priv_{IR}(A,b)$.
	
	Let us consider the case $w\in Priv_{IR}(A,b)$.
	It is illustrated in Figure~\ref{fig:P8ASC}.
	Since $U$ dominates $V$ and $N[w]\subseteq N[b]$ we know that $w\not\in V$ (as otherwise $b$ is adjacent to a vertex of $U$, i.e., a private neighbor of~$a$).
	Hence $w\in Priv_{IR}(A,b)\cap N[I]$. 
	Note that $w\not\in I$ as $N[w]\subseteq N[b]$ ($w$ cannot be part of an irredundant extension if it has no private neighbors).
	Accordingly, consider $x\in I$ such that $wx\in E(G)$.
	Since $N[w]\subseteq N[b]$, $xb\in E(G)$.
	Since $v\not\in N[I]$, $xv\not\in E(G)$.
	Now, since $x$ belongs to $I$ it has a private neighbor $y\in N[x]\setminus N[w]$.
	As $G$ is chordal $u''u'uvbxxy$ induces a $P_7$, concluding the first part of the proposition in this case.
	Let us now assume that $\D_{RN}(G)$ is not an accessible set system, that is $A\setminus \{c\}\not\in \D_{RN}(G)$ for any $c\in A$.
	Observe that if replacing $x$ by $Priv_{IR}(A'\cup I', x)$ in $I'$ for all $x\in N(w)\cap I'$ does not dominate $Priv_{IR}(A',c)$ for any $c\in A'\setminus \{b\}$, then $w$ becomes a private neighbor of~$b$, and $A\setminus \{a\} \in \D_{RN}(G)$, a contradiction.
	Consequently there must exist $x\in I$ such that $wx\in E(G)$ and $y\in Priv_{IR}(A'\cup I', x)$, $c\in A$ and $z\in Priv_{IR}(A',c)$ such that $yz\in E(G)$.
	Also $yb,zx,cy\not\in E(G)$ as $y$ and $z$ are private neighbors of $x$ and~$c$.
	Since $G$ is chordal, $u''u'uvbxyzc$ induces $P_9$, concluding the second part of the proposition in this case.

	Let us now consider the other case $w\not\in Priv_{IR}(A,b)$.
	Then there must exist $c\in A\setminus \{b\}$ such that $wc\in E(G)$.
	Since $N[w]\subseteq N[b]$, we have $bc\in E(G)$.
	Consequently $a\neq c$.
	Furthermore since $v$ is a private neighbor of $b$, $cv\not\in E(G)$.
	Since $c\in A$ it has a private neighbor $z$, and $bz\not\in E(G)$.
	As $G$ is chordal $u''u'uvbcz$ induces a $P_7$, concluding the first part of the proposition in this second case.
	Let us now assume that $\D_{RN}(G)$ is not an accessible set system.
	Then $A\setminus \{c\}\not\in \D_{RN}(G)$.
	Observe that if every private neighbor $z$ of $c$ is such that $N[z]\subseteq N[c]$, then replacing $c$ by every such private neighbors in $A\cup I$ yields a minimal dominating set $D$ of $G$ such that $D_{RN}=A\setminus \{c\}$, a contradiction.
	Hence there exist $z\in N[c]\setminus N[b]$ and $z'\in N[z]\setminus N[c]$.
	As $G$ is chordal $u''u'uvbczz'$ induces a $P_8$, concluding the second part of the proposition in this case, and the proof.
\end{proof}

\begin{proposition}\label{prop:Pk-Pk-4}
	Let $G$ be a chordal graph and $C$ be an irredundant component of~$G$.
	Then the graph $G[C]$ is $P_{k-4}$-free chordal whenever $G$ is $P_k$-free, $k\geq 6$.
\end{proposition}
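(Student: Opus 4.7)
The plan is to prove the contrapositive: if $G[C]$ contains an induced $P_{k-4}$, then $G$ contains an induced $P_k$, contradicting the hypothesis. Chordality of $G[C]$ is immediate since being chordal is hereditary, so all the work goes into the path-extension argument. In the base case $k=6$, where $P_{k-4}=P_2$, any edge $uv$ of $G[C]$ consists of two adjacent irredundant vertices of $G$, and Proposition~\ref{prop:IR-path} directly produces the desired induced $P_6$.

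For $k \geq 7$, fix an induced $P_{k-4}$, say $w_1 w_2 \cdots w_{k-4}$, in $G[C]$. The idea is to attach two vertices at each endpoint by invoking Proposition~\ref{prop:IR-path} on the irredundant edges $w_1w_2$ and $w_{k-5}w_{k-4}$. This supplies $w_1' \in N[w_1]\setminus N[w_2]$, $w_1'' \in N[w_1']\setminus N[w_1]$ on the left, and symmetric vertices $w_{k-4}'$, $w_{k-4}''$ on the right. The claim to establish is then that
\[
Q \;:=\; w_1''\, w_1'\, w_1\, w_2\, \cdots\, w_{k-4}\, w_{k-4}'\, w_{k-4}''
\]
is a sequence of $k$ pairwise distinct vertices inducing a $P_k$ in $G$.

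The verification is a uniform chord analysis: every potential edge between two non-consecutive vertices of $Q$, combined with an appropriate subpath of $Q$, would close an induced cycle in $G$ of length at least $4$, contradicting chordality. Concretely, taking $j \geq 2$ minimal with $w_1' w_j \in E(G)$, the cycle $w_1\, w_1'\, w_j\, w_{j-1}\, \cdots\, w_2\, w_1$ is chordless by the minimality of $j$ together with the induced structure of the original path (it is a $4$-cycle when $j=2$ and of length $\geq 5$ otherwise); non-adjacency between $w_1''$ and every $w_j$ is handled analogously by routing the cycle through $w_1'$; the right-hand extensions are treated by a mirror argument; and a putative chord between a left-side extension and a right-side extension would close a cycle traversing the entire path, of length $\geq 5$ thanks to $k \geq 7$. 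Distinctness of the $k$ vertices follows from the same analysis, since any coincidence between two of them would force an edge already ruled out.

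The main obstacle is simply bookkeeping: a constant number of case families must be inspected (the four new vertices paired against one another and against the interior of the path). However, each case collapses to the same schema—exhibit an induced cycle of length $\geq 4$ and invoke chordality—so no additional ingredient beyond Proposition~\ref{prop:IR-path} and the induced structure of the starting $P_{k-4}$ should be required.
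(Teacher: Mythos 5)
Your proof is correct and follows essentially the same route as the paper: extend an induced $P_{k-4}$ of $G[C]$ by two vertices at each end via Proposition~\ref{prop:IR-path} applied to the endpoint edges, and use chordality to rule out chords. The paper states the inducedness of the extended path without the explicit chord analysis you spell out (your only slip, the vacuous ``$j=2$'' case for $w_1'$, which cannot occur since $w_1'\notin N[w_2]$, is harmless).
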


\begin{proof}
	We proceed by contradiction.
	Let $G$ be a $P_k$-free graph, $k\geq 6$ and $C$ be an irredundant component of $G$.
	Suppose that $G[C]$ is not $P_{k-4}$-free, and let $P_{uv}$ be an induced path of length at least $k-4$ in $G[C]$ with endpoints $u$ and~$v$.
	Let $u^*$ and $v^*$ be the neighbors of $u$ and $v$ in $P_{uv}$ (possibly $u^*=v$ and $v^*=u$, or $u^*=v^*$).
	By Proposition~\ref{prop:IR-path} since $u,u^*$ and $v,v^*$ are irredundant and adjacent, there exist $u'',u',v',v''$ such that $u''u'P_{uv}v'v''$ induces a path of length at least $k$ in $G$, a contradiction.
\end{proof}

\begin{proposition}\label{prop:Na-connected}
	Let $G$ be a chordal graph, $a\in RN(G)$, $C$ be an irredundant component of~$G$, and $u,v$ be two vertices in $C\cap N(a)$.
	Then $N(a)$ contains every induced path from $u$ to~$v$.
	In particular $G[N(a)\cap C]$ is connected.
\end{proposition}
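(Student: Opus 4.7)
The plan is to exploit chordality by closing any induced path from $u$ to $v$ into a cycle through $a$, and then proceed by induction on the path's length. Let $P=u_0u_1\ldots u_k$ be an induced path in $G$ with $u_0=u$ and $u_k=v$. For $k\le 1$ the conclusion is immediate, so assume $k\ge 2$. Since both endpoints lie in $N(a)$, the vertex $a$ together with $P$ forms a cycle of length $k+2\ge 4$. Because $G$ is chordal this cycle must admit a chord; as $P$ is induced, no edge $u_iu_j$ with $|i-j|\ge 2$ exists, so the only possible chord is of the form $au_i$ with $1\le i\le k-1$, and in particular $u_i\in N(a)$.

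I would then apply the induction hypothesis to the two induced subpaths $u_0\ldots u_i$ and $u_i\ldots u_k$, both strictly shorter than $P$ and both having their endpoints in $N(a)$. This yields $u_j\in N(a)$ for every $j$, establishing the first assertion. One should observe on the way that $a$ cannot occur as an internal vertex of $P$: since $a$ is adjacent to $u_0$ and $u_k$, an occurrence of $a$ as some $u_j$ would produce a chord of $P$, contradicting the assumption that $P$ is induced.

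For the second assertion, pick $u,v\in N(a)\cap C$. Since $C$ is an irredundant component, $G[C]$ is connected and thus contains an induced path from $u$ to $v$; because $G[C]$ is an induced subgraph of $G$, this path is also induced in $G$. Applying the first part forces every vertex of the path to lie in $N(a)\cap C$, so $u$ and $v$ are connected in $G[N(a)\cap C]$.

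The main obstacle, and really the only delicate point, is ensuring that the chord of the cycle $au_0u_1\ldots u_ka$ must be of the form $au_i$: this relies on $P$ being induced to rule out shortcut edges within $P$, and on the remark above that $a$ itself does not appear on $P$. Once this is secured, the induction is routine chordal-graph bookkeeping.
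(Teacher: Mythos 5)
Your argument follows essentially the same route as the paper's: both close the $u$--$v$ path into a cycle through $a$ and let chordality do the work. The paper extracts a chordless cycle directly from a maximal subpath lying outside $N(a)$, whereas you induct on the length of the path via the forced chord $au_i$; these are interchangeable, and your induction together with the derivation of connectivity of $G[N(a)\cap C]$ is fine.

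The one step that does not hold as written is the parenthetical claim that $a$ cannot occur as an internal vertex of $P$ because such an occurrence ``would produce a chord of $P$''. For $k=2$ this is false: if $uv\notin E(G)$, then $uav$ is an induced path from $u$ to $v$ in $G$, the edges $au$ and $av$ are path edges rather than chords, and indeed $N(a)$ does not contain this path since $a\notin N(a)$. So, read literally over all induced paths of $G$, the first assertion fails on this path and your induction cannot even be set up there (the ``cycle'' $au_0u_1u_2a$ degenerates). The resolution is that the statement is meant for --- and only ever applied to --- induced paths inside $G[C]$, which is also the setting of the paper's proof; there $a$ can never appear on the path, because $a\in RN(G)$ while $C$ is a connected component of $G[IR(G)]$ and hence $C\subseteq IR(G)$. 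Once you restrict to paths avoiding $a$ (in particular to paths in $G[C]$), everything goes through: for $k\ge 3$ your chord argument does correctly exclude internal occurrences of $a$, and for $k=2$ the only problematic path is $uav$ itself. Your proof of the second assertion is unaffected, since the path used there lives in $G[C]$.
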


\begin{proof}
	Clearly the proposition holds if $uv\in E(G)$.
	Let $u,v$ be two non-adjacent vertices in $C\cap N(a)$. 
	Let $P_{uv}$ be an induced path from $u$ to $v$ in $G[C]$.
	One such path exists since $G[C]$ is connected.
	Let us assume for contradiction that there exists $x\in P_{uv}$ such that $x\not\in N(a)$.
	Consider $u^*$ and $v^*$ to be the first elements of $P_{uv}$ respectively in the way from $x$ to $u$, and from $x$ to $v$, such that $u^*,v^*\in N(a)$ (possibly $u^*=u$ and $v^*=v$).
	Consider the path $P_{u^*v^*}$ obtained from $P_{uv}$ and shortened at endpoints $u^*$ and $v^*$.
	Then $P_{u^*v^*}$ is an induced path with only its endpoints adjacent to $a$, inducing a hole in $G$, a contradiction.
\end{proof}

\begin{proposition}\label{prop:Na-complete}
	Let $G$ be a chordal graph and $a\in RN(G)$.
	Then $a$ is partially adjacent to at most one irredundant component of $G$ (it is either disconnected or complete to all other irredundant components of $G$) whenever $G$ is $P_9$-free chordal.
\end{proposition}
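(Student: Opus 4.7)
The plan is to argue by contradiction: suppose $a \in RN(G)$ is partially adjacent to two distinct irredundant components $C_1$ and $C_2$ of $G$. First I would use the connectedness of each $G[C_i]$ together with the fact that $a$ has both neighbors and non-neighbors in $C_i$ to find an edge $u_1 u_2 \in E(G)$ with $u_1 \in N(a) \cap C_1$ and $u_2 \in C_1 \setminus N(a)$, and similarly an edge $v_1 v_2 \in E(G)$ with $v_1 \in N(a) \cap C_2$ and $v_2 \in C_2 \setminus N(a)$. Since edges between $C_1$ and $C_2$ cannot exist in $G$ (their endpoints would lie in $IR(G)$ and in distinct components of $G[IR(G)]$), the sequence $u_2 u_1 a v_1 v_2$ forms an induced $P_5$ in $G$.

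Next, I would extend this $P_5$ by two vertices at each end. Applying Proposition~\ref{prop:IR-path} to the adjacent irredundant pair $(u_2, u_1)$ yields $u' \in N[u_2] \setminus N[u_1]$ and $u'' \in N[u'] \setminus N[u_2]$ such that $u'' u' u_2 u_1$ is an induced $P_4$ (as a subpath of the $P_6$ produced by the proposition). Symmetrically, applying the proposition to $(v_2, v_1)$ yields $v', v''$ with $v_1 v_2 v' v''$ induced. The candidate $P_9$ is then $u'' u' u_2 u_1 a v_1 v_2 v' v''$; exhibiting it in $G$ contradicts the $P_9$-freeness of $G$ and closes the proof.

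The bulk of the work, and the main obstacle, lies in verifying that this nine-vertex sequence is indeed an induced path: every non-consecutive pair must be a non-edge, and the vertices must be pairwise distinct. Several non-edges are immediate ($a u_2, a v_2 \notin E(G)$, plus the non-edges internal to each of the two $P_4$s), and any edge between $\{u_1, u_2\}$ and $\{v_1, v_2\}$ is excluded by the component argument above. The remaining potential edges (for instance $au'$, $au''$, $u' v_1$, $v' u_2$, $u' v'$, $u'' v''$, and so on) I would rule out inductively using chordality: assuming such an edge, one obtains a cycle of length at least four among vertices already considered whose every other chord position has been excluded at a previous stage, contradicting the absence of chordless cycles. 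Ordering the cases from shortest candidate cycles to longest ensures each step uses only previously established non-edges; the same arguments also enforce vertex distinctness, since any coincidence (such as $u' = v''$) would entail one of the already excluded edges (in this example $u' v' \in E(G)$).
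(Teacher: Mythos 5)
Your proposal is correct and follows essentially the same strategy as the paper: connect the two components through $a$, extend both ends by two vertices via Proposition~\ref{prop:IR-path}, and use chordality to certify inducedness. The only (harmless) difference is that you build the middle of the path from a single boundary edge $u_1u_2$ in each component, whereas the paper takes a shortest path from a non-neighbor to a neighbor of $a$ and truncates it at the last vertex adjacent to $a$, obtaining an induced path of length at least nine rather than exactly nine; your explicit treatment of the distance-two non-edges as base cases for the chordality induction is sound and in fact more detailed than the paper's.
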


\begin{proof}
	We proceed by contradiction.
	Let us assume that $G$ is $P_9$-free chordal and that there exist two irredundant components $C_1,C_2$ such that $C_1\cap N(a)\neq \emptyset$, $C_2\cap N(a)\neq \emptyset$, and $C_1,C_2\not\subseteq N(a)$.
	Let $u\in C_1\cap N(a)$, $u'\in C_1\setminus N(a)$, $v\in C_2\cap N(a)$ and $v'\in C_2\setminus N(a)$.
	Consider a shortest path $P_{u'u}$ in $G[C_1]$ from $u'$ to $u$, and one $P_{vv'}$ in $G[C_2]$ from $v$ to $v'$.
	These paths are induced.
	Let $u^*$ and $v^*$ be the neighbors of $u'$ and $v'$ in $P_{u'u}$ and $P_{vv'}$, respectively (possibly $u^*=u$ and $v^*=v$).
	By Proposition~\ref{prop:IR-path} since $u',u^*$ and $v',v^*$ are irredundant and adjacent, there exist $u''$, $u'''$, $v''$ and $v'''$ such that $u'''u''u'u^*$ and $v^*v'v''v'''$ induce paths of length four in $G$.
	Consider $x$ the last vertex in $P_{u'u}$ starting from $u$ which is adjacent to $a$, and $y$ the last vertex in $P_{vv'}$ starting from $v$ which is adjacent to $a$ (possibly $x=u^*$ and $y=v^*$ but $x\neq u'$, $y\neq v'$).
	Consider the paths $P_{u'x}$ and $P_{yv'}$ obtained from $P_{u'u}$ and $P_{vv'}$ and shortened at endpoints $x$ and $y$.
	Then $u'''u''P_{u'x}aP_{yv'}v''v'''$ induces a path of length at least nine in $G$, a contradiction.
\end{proof}

In the following, for a set $A\subseteq RN(G)$ we consider the following bipartition.
The part $B(A)$ contains the elements of $A$ having an irredundant private neighbor in some irredundant component $C$ such that $C\subseteq N(A)$.
Observe that no irredundant extension of $A$ can steal these private neighbors, as only $IR(G)\setminus N(A)$ has to be dominated by such extensions, and $C$ is disconneced from $IR(G)\setminus N(A)$.
The part $R(A)$ contains all other elements of~$A$.
We call {\em red} and {\em blue} vertices the elements of $R(A)$ and $B(A)$, respectively.
If $C_i$ is an irredundant component of $G$, then $R_i(A)$ denote the red elements of $A$ having at least one private neighbor in $C_i$.
Recall that by Proposition~\ref{prop:Na-complete}, the elements of $A$ are partially adjacent to at most one irredundant component whenever $G$ is $P_9$-free chordal.
In particular in such class, the red elements have their private neighbors in at most one irredundant component.
The next theorem follows.

\begin{theorem}\label{thm:P9-IR-characterization}
	Let $G$ be a $P_9$-free chordal graph, $A\in \D_{RN}(G)$ and $I\subseteq IR(G)$.
	Then $I$ is an irredundant extension of $A$ if and only if for every irredundant component $C_i$ of $G$, $D_i=I\cap C_i$ is minimal such that
	\begin{itemize}
		\item $D_i$ dominates $C_i\setminus N(A)$, but
		\item $D_i$ does not dominate $Priv_{IR}(A,x)$ for any $x\in R_i(A)$.
	\end{itemize}
\end{theorem}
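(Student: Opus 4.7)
The starting point is Corollary \ref{cor:IR-private}, which characterizes $\DIR(A)$ as the family of $I\subseteq IR(G)$ that are minimal subject to dominating $IR(G)\setminus N(A)$ and failing to dominate $Priv_{IR}(A,x)$ for every $x\in A$. The plan is to decompose this global characterization into one local condition per irredundant component. The underlying structural observation is that $IR(G)=\bigcup_i C_i$ is a disjoint union, and since each $C_i$ is a connected component of $G[IR(G)]$, we have $N[v]\cap IR(G)\subseteq C_i$ for every $v\in C_i$. Consequently, for any $I\subseteq IR(G)$ and any $S\subseteq C_i$, $I$ dominates $S$ if and only if $D_i:=I\cap C_i$ does.

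The next step is to localize condition (b) from Corollary \ref{cor:IR-private}. Using Proposition \ref{prop:Na-complete}, I would first argue that for every red vertex $x\in R(A)$ the set $Priv_{IR}(A,x)$ lies in a single component $C_i$, placing $x$ in $R_i(A)$: if $x$ had private neighbors in two components, Proposition \ref{prop:Na-complete} would force it to be complete to at least one of them, which would then be contained in $N(A)$ and would make $x$ blue, contradicting redness. For blue vertices $x\in B(A)$, I would exploit the very definition of $B(A)$: pick a private neighbor $p$ of $x$ lying in some $C_j\subseteq N(A)$. Then $C_j\setminus N(A)=\emptyset$ and $R_j(A)=\emptyset$ (any red vertex with a private neighbor in $C_j$ would itself be blue by definition), so the local conditions are vacuous on $C_j$ and minimality forces $D_j=\emptyset$. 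By the component-localization of neighborhoods, this yields $p\notin N[I]$ automatically, so blue obligations are satisfied for free without appearing in the local conditions.

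With these two localizations in hand, both implications follow by the same cut-and-paste argument. For the forward direction, given $I\in\DIR(A)$, set $D_i=I\cap C_i$; conditions (a) and (b) follow immediately from the localizations. If some $D_i'\subsetneq D_i$ also satisfied the local conditions, then $I':=(I\setminus C_i)\cup D_i'$ would be a proper subset of $I$ still satisfying the hypotheses of Corollary \ref{cor:IR-private}---the red checks in $C_i$ by assumption on $D_i'$, the red checks outside $C_i$ unchanged, and the blue checks by $N[I']\subseteq N[I]$---contradicting minimality. The converse is symmetric: the union $I=\bigcup_i D_i$ dominates $IR(G)\setminus N(A)$ componentwise, evades $Priv_{IR}(A,x)$ for red $x$ by the local assumption and for blue $x$ by the $D_j=\emptyset$ argument, and any $I'\subsetneq I$ losing an element in some $C_i$ restricts to a proper subset $D_i'=I'\cap C_i$ that still satisfies the local conditions, contradicting the minimality of $D_i$.

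The main obstacle is the handling of blue vertices. One might naively expect them to impose constraints on the local problem in each component containing one of their private neighbors; the content of the argument is that, under $P_9$-freeness, their obligations are taken care of automatically by $D_j=\emptyset$ on components $C_j\subseteq N(A)$, and this is precisely where Proposition \ref{prop:Na-complete} (via the red/blue dichotomy) plays its role.
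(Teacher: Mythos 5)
Your proof is correct, and it follows essentially the same route the paper intends: the paper states the theorem without a detailed proof ("The next theorem follows"), relying exactly on the ingredients you formalize, namely Corollary~\ref{cor:IR-private}, the componentwise localization of neighborhoods within $G[IR(G)]$, the fact (via Proposition~\ref{prop:Na-complete}) that red vertices have all their irredundant private neighbors in a single component, and the observation that blue vertices' private neighbors in components $C\subseteq N(A)$ cannot be stolen. Your explicit handling of the blue case through the forced $D_j=\emptyset$ and the cut-and-paste minimality argument is a faithful and complete elaboration of that sketch.
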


We immediately derive the next two corollaries, observing for the first one that a minimal set $I$ as described in Theorem~\ref{thm:P9-IR-characterization} can be greedily obtained from a non-minimal such set, and for the second that by Proposition~\ref{prop:Pk-Pk-4}, every irredundant component $C$ of $G$ is a clique whenever $G$ is $P_7$-free chordal.

\begin{corollary}\label{cor:P9-IR-characterization}
	Let $G$ be a $P_9$-free chordal graph and $A\subseteq RN(G)$.
	Then $A\in \D_{RN}(G)$ if and only if every $a\in A$ has an irredundant private neighbor, and, for every irredundant component $C_i$ of $G$ there exists $D_i\subseteq C_i$ such that
	\begin{itemize}
		\item $D_i$ dominates $C_i\setminus N(A)$, but
		\item $D_i$ does not dominate $Priv_{IR}(A,x)$ for any $x\in R_i(A)$.
	\end{itemize}
\end{corollary}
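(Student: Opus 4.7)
The plan is to derive both directions from Corollary~\ref{cor:IR-private}, which already characterizes membership in $\D_{RN}(G)$ via the existence of a suitable global $I\subseteq IR(G)$. The forward direction will feed off Theorem~\ref{thm:P9-IR-characterization}, while the backward direction will build a candidate $I$ by gluing the per-component sets $D_i$ and appealing to Proposition~\ref{prop:Na-complete} to localize each $Priv_{IR}(A,a)$.

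For the \emph{only if} direction, assume $A\in\D_{RN}(G)$. Corollary~\ref{cor:IR-private} already gives that every $a\in A$ has an irredundant private neighbor. Pick any irredundant extension $I\in\DIR(A)$, which exists by definition of $\D_{RN}(G)$. Theorem~\ref{thm:P9-IR-characterization} then tells us that for every irredundant component $C_i$ the set $D_i = I\cap C_i$ is minimal satisfying the two bullet conditions; in particular such a $D_i$ exists, which is all the corollary requires.

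For the \emph{if} direction, given a family $\{D_i\}$ satisfying the hypotheses, I would build a global $I$ as follows: whenever $C_i\subseteq N(A)$, overwrite the given $D_i$ with the empty set (this remains a valid witness because $C_i\setminus N(A)=\emptyset$ is vacuously dominated and $\emptyset$ dominates no private neighborhood), and set $I=\bigcup_i D_i$. The first bullet then gives that $I$ dominates $\bigcup_i(C_i\setminus N(A)) = IR(G)\setminus N(A)$. To check that $I$ does not dominate $Priv_{IR}(A,a)$ for any $a\in A$, split $A=R(A)\cup B(A)$. If $a\in R(A)$, Proposition~\ref{prop:Na-complete} ensures $a$ is partially adjacent to at most one irredundant component, and by definition of $R(A)$ it has no irredundant private neighbor in components fully contained in $N(A)$; hence all its irredundant private neighbors sit in a single $C_j$ with $a\in R_j(A)$. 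The second bullet says $D_j$ does not dominate $Priv_{IR}(A,a)$, and every other $D_k$ lies in a different irredundant component, so its elements have no neighbor in $C_j$ at all. If instead $a\in B(A)$, there is a component $C\subseteq N(A)$ containing an irredundant private neighbor $p$ of $a$; by construction $D_C=\emptyset$ and no other $D_k$ can reach $C$, so $p\notin N[I]$. Corollary~\ref{cor:IR-private} then yields $A\in\D_{RN}(G)$.

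The main obstacle I anticipate is the asymmetric handling of red and blue vertices: the hypothesis constrains each $D_i$ only with respect to red vertices in $R_i(A)$, so a priori nothing prevents $\bigcup_i D_i$ from killing every private neighbor of a blue vertex. The resolution is precisely the observation that a blue vertex's witnessing component $C$ lies entirely inside $N(A)$, which renders the domination requirement on $D_C$ vacuous and lets us safely pick $D_C=\emptyset$; this is the one place where we must deviate from the $D_i$ the hypothesis hands us.
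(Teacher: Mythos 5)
Your proof is correct. The paper disposes of this corollary in a single remark: a minimal $I$ as in Theorem~\ref{thm:P9-IR-characterization} can be obtained greedily from a non-minimal one, so the existence of (not necessarily minimal) sets $D_i$ suffices. Your forward direction matches that exactly. For the backward direction you instead bypass the theorem and verify Corollary~\ref{cor:IR-private} directly with the possibly non-minimal union $I=\bigcup_i D_i$, after resetting $D_i$ to $\emptyset$ on every component contained in $N(A)$. This is sound and arguably the cleaner route, since Theorem~\ref{thm:P9-IR-characterization} as stated presupposes $A\in\D_{RN}(G)$ and so cannot be invoked verbatim to establish membership; note also that your explicit overwriting is exactly what the paper's greedy minimization produces on those components (the minimal set subject to vacuous constraints is empty, as $R_i(A)=\emptyset$ whenever $C_i\subseteq N(A)$). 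Your red/blue case analysis, resting on Proposition~\ref{prop:Na-complete} to localize $Priv_{IR}(A,a)$ inside a single component for red $a$, is precisely the localization the paper sets up just before the theorem. One small point worth stating explicitly: concluding that a red vertex $a$ belongs to some $R_j(A)$ requires $Priv_{IR}(A,a)\neq\emptyset$, which is supplied by the first condition of the corollary; without it, $I$ would vacuously dominate the empty private neighborhood and the argument would not close.
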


\begin{corollary}\label{cor:P7-IR-characterization}
	Let $G$ be a $P_7$-free chordal graph.
	Then $\D_{RN}(G)=\{A\subseteq RN(G) \mid$ every $x\in A$ has a private neighbor in some irredundant component $C\subseteq N(A)$, i.e., $R(A)=\emptyset\}$.
\end{corollary}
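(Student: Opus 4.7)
The plan is to combine two ingredients. First, a $P_7$-free graph is in particular $P_9$-free, so Corollary~\ref{cor:P9-IR-characterization} applies verbatim. Second, Proposition~\ref{prop:Pk-Pk-4} instantiated at $k=7$ tells us that every irredundant component $C$ of $G$ is $P_3$-free chordal; being connected, $C$ is therefore a clique. With these two facts in hand I will argue the two inclusions separately.

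For the forward direction, I would assume $A\in\D_{RN}(G)$ and argue by contradiction that $R(A)=\emptyset$. Suppose some $a\in R_i(A)$ exists for an irredundant component $C_i$. By definition $a$ admits an irredundant private neighbor $u\in C_i$ and $C_i\not\subseteq N(A)$, so $C_i\setminus N(A)\neq\emptyset$. Corollary~\ref{cor:P9-IR-characterization} then forces $D_i$ to be non-empty in order to dominate $C_i\setminus N(A)$; but since $C_i$ is a clique, any vertex of $D_i$ is adjacent to $u$, so $D_i$ dominates $Priv_{IR}(A,a)$, contradicting the second bullet of the corollary.

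For the backward direction, assume every $a\in A$ has an irredundant private neighbor in some irredundant component $C\subseteq N(A)$, i.e., $R(A)=\emptyset$. The first hypothesis of Corollary~\ref{cor:P9-IR-characterization} (existence of an irredundant private neighbor for every $a \in A$) then holds by the very definition of $B(A)$. For each irredundant component $C_i$ the emptiness of $R_i(A)$ makes the second bullet of the corollary vacuous, and since $C_i$ is a clique it suffices to take $D_i=\emptyset$ when $C_i\subseteq N(A)$, and $D_i=\{v\}$ for an arbitrary $v\in C_i\setminus N(A)$ otherwise; in both cases $D_i$ dominates $C_i\setminus N(A)$. The corollary then certifies $A\in\D_{RN}(G)$.

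The most delicate point, if any, is the forward contradiction — one has to check carefully that $D_i$ cannot avoid covering the private neighbor of $a$ — but this collapses to a one-line observation once Proposition~\ref{prop:Pk-Pk-4} has been used to upgrade each irredundant component to a clique; the rest is direct bookkeeping on top of Corollary~\ref{cor:P9-IR-characterization}.
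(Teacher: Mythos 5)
Your proposal is correct and follows essentially the same route as the paper, which derives this corollary from Theorem~\ref{thm:P9-IR-characterization} (equivalently Corollary~\ref{cor:P9-IR-characterization}) by the single observation that, via Proposition~\ref{prop:Pk-Pk-4} with $k=7$, every irredundant component is a connected $P_3$-free graph and hence a clique. Your two-direction bookkeeping (a non-empty $D_i$ in a clique component necessarily steals all of a red vertex's private neighbors, which lie in that one component by the remark following Proposition~\ref{prop:Na-complete}; conversely a single vertex of $C_i\setminus N(A)$ dominates the clique when $R_i(A)=\emptyset$) is exactly the detail the paper leaves implicit.
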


\section{Enumerating the redundant part of minimal dominating sets}\label{sec:redundant}

This section is devoted to the complexity analysis of Line~\ref{line:main-forallRN} of Algorithm~\ref{algo:main}.
More precisely, we show that enumerating the redundant part of minimal dominating sets can be done with linear and polynomial delays in $P_7$-free and $P_8$-free chordal graphs.

Recall that by Proposition~\ref{prop:ASC-IS}, the set $\D_{RN}(G)$ is an accessible set system whenever $G$ is $P_8$-free chordal.
Hence, it is sufficient to be able to decide whether (i) a given set $A\subseteq RN(G)$ belongs to $\D_{RN}(G)$, and (ii) a given vertex $c$ of $A\in\D_{RN}$ is a maximal generator of~$A$, in order to get an algorithm enumerating $\D_{RN}(G)$ without repetitions in such class.
We~call {\em irredundant extension problem} the first decision problem (denoted by \textsc{IEP}), and {\em maximal generator problem} the second (denoted by \textsc{MGP}).
The algorithm proceeds as follows.
See Algorithm~\ref{algo:RN}.
Given $A\in \D_{RN}(G)$ (starting with $A=\emptyset$ according to Proposition~\ref{prop:ACS-emptyset}) it checks for every candidate vertex $c\in RN(G)\setminus A$ whether $A\cup \{c\}$ belongs to $\D_{RN}(G)$, whether $c$ is a maximal generator of $A\cup \{c\}$, and if so, makes a recursive call on such a set.
The correctness of the algorithm follows from the fact that $\D_{RN}(G)$ being an accessible set system, every set in $\D_{RN}(G)$ is accessible by such a procedure.
In particular, every set $A$ received by the algorithm belongs to $\D_{RN}(G)$.
Repetitions are avoided by the choice of $c$.

\begin{algorithm}
	\SetAlgoLined

	\SetKwProg{myproc}{Procedure}{}{}
	\myproc{{\em \texttt{RNDom}($G$)}}{
		\texttt{RecRNDom}($G, \emptyset$)\;
	}

	\SetKwProg{myproc}{Procedure}{}{}
	\myproc{{\em \texttt{RecRNDom}($G, A$)}}{
		{\bf output} $A$\;\label{line:RN-output}

		\For{{\bf all} $c\in RN(G)\setminus A$\label{line:RN-forall}}
		{	
			\If{$A\cup \{c\}\in \D_{RN}(G)$ {\bf and} $c$ {is a maximal generator of} $A\cup \{c\}$\label{line:RN-condition}}{
				\texttt{RecRNDom}($G,A\cup\{c\}$)\;\label{line:RN-recursivecall}
			}	

		}\label{line:RN-afterloop}
	}

	\caption{An algorithm enumerating the set $\D_{RN}(G)$ of a $P_8$-free chordal graph~$G$, relying on the fact that $\D_{RN}(G)$ is an accessible set system on such class.}
    \label{algo:RN}
\end{algorithm}

\subsection{Linear delay implementation in $P_7$-free chordal graphs.}
We show that there is a linear-delay implementation of Algorithm~\ref{algo:RN} in $P_7$-free chordal graphs.
The proof is technically involved and makes use of preprocessed arrays that are maintained throughout the computation.

\begin{theorem}\label{thm:P7-RN}
	There is an $O(n+m)$ delay, $O(n^2)$ space and $O(n^2)$ preprocessing-time implementation of Algorithm~\ref{algo:RN} whenever $G$ is $P_7$-free chordal, where $n$ and $m$ respectively denote the number of vertices and edges in $G$.
\end{theorem}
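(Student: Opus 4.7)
The approach exploits two simplifications specific to $P_7$-free chordal graphs. By Proposition~\ref{prop:Pk-Pk-4} applied with $k=7$, every irredundant component of $G$ is $P_3$-free chordal, hence a clique; combined with Corollary~\ref{cor:P7-IR-characterization}, this reduces the test $A\cup\{c\}\in\D_{RN}(G)$ to checking, for every $x\in A\cup\{c\}$, that $x$ has an irredundant private neighbor in an irredundant component entirely contained in $N(A\cup\{c\})$. Moreover, by Proposition~\ref{prop:ASC-IS}, $\D_{RN}(G)$ is an independence system, so the maximal generator of $A\cup\{c\}$ is simply the element of largest index, and the \textsc{MGP} check at Line~\ref{line:RN-condition} reduces to an $O(1)$ comparison $c>\max(A)$; the loop at Line~\ref{line:RN-forall} may therefore be restricted to $c>\max(A)$.

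The plan is to preprocess $G$ in $O(n^2)$ time and $O(n^2)$ space, computing $IR(G)$, $RN(G)$, the partition of $IR(G)$ into irredundant components $C_1,\ldots,C_\ell$, and, for every vertex $v$ and component $C_i$, the list $N(v)\cap C_i$ together with its cardinality. Throughout the recursion I will maintain, for the current set $A$: the counter $f_A(u)=|\{a\in A:u\in N(a)\}|$ for every $u\in IR(G)$, together with a pointer to the unique dominator whenever $f_A(u)=1$; the number of dominated vertices of each $C_i$, so that the predicate ``$C_i\subseteq N(A)$'' is evaluated in $O(1)$; for every pair $(a,C_i)$ with $a\in A$, the bucket $L_i(a)$ of vertices of $C_i$ whose unique dominator in $A$ is $a$; for every $a\in A$, a witness good private neighbor $w(a)$, or $\bot$ if none; and a global counter of elements of $A$ currently without a witness. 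Deciding $A\in\D_{RN}(G)$ will then reduce to reading this global counter in $O(1)$.

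The main obstacle is to perform the tentative addition of a candidate $c$ to $A$ in time $O(|N(c)|)$ while keeping all these invariants up to date, and to undo the addition within the same budget. Only vertices $u\in N(c)\cap IR(G)$ see their counter $f_A(u)$ change, so the buckets $L_i(a)$ and the witness assignments are rearranged in time $O(|N(c)|)$. A component $C_i$ can transition from non-full to full only if it meets $N(c)$, and in that case the re-evaluation of which $a$'s gain a witness in $C_i$ is driven by the buckets $L_i(a)$ rather than by scanning $C_i$ as a whole; the precomputed lists $N(v)\cap C_i$ are used to enumerate the relevant pairs directly from $N(c)$, so that the amortized cost of a component transition also fits within $O(|N(c)|)$. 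Once these updates are performed, the \textsc{IEP} test amounts to checking that the global counter is zero and that $c$ itself possesses a witness, both in $O(1)$.

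Summing the $O(|N(c)|)$ per-candidate cost over the loop at Line~\ref{line:RN-forall} gives a total of $\sum_{c\in RN(G)\setminus A}O(|N(c)|)=O(n+m)$ of work per recursive call, excluding subcalls. Following the classical alternating-output DFS scheme, namely outputting before the first recursive call and between consecutive ones so that every root-to-leaf subpath of the recursion tree contains an output, the delay between two consecutive outputs is therefore $O(n+m)$, yielding the claimed bound together with $O(n^2)$ preprocessing time and $O(n^2)$ total space.
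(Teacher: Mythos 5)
Your overall architecture matches the paper's: restrict the loop to $c>\max(A)$ using the independence-system property, reduce the membership test via Corollary~\ref{cor:P7-IR-characterization} to ``every element has a private neighbor in a fully dominated clique component,'' maintain counters so that a tentative addition of $c$ and its undo cost $O(\deg(c))$, sum to $O(n+m)$ over the loop, and use the alternating-output trick for consecutive backtracks. However, there is a genuine gap in the one step that is actually hard: maintaining, within $O(\deg(c))$, which elements of $A$ still possess a \emph{good} private neighbor (one lying in a component entirely contained in $N(A\cup\{c\})$). Your claim that ``the amortized cost of a component transition also fits within $O(|N(c)|)$'' because the re-evaluation is ``driven by the buckets $L_i(a)$ \dots directly from $N(c)$'' does not hold up. When a component $C_i$ becomes fully dominated upon adding $c$, the elements $a$ whose private neighbors in $C_i$ turn good are determined by \emph{all} the private neighbors sitting in $C_i$, most of which need not lie in $N(c)$; propagating the transition eagerly costs up to $O(|C_i|)$, not $O(|N(c)\cap C_i|)$. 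Conversely, if you evaluate lazily, then rechecking a single $a$ whose witness $w(a)$ was stolen by $c$ requires scanning all of $a$'s buckets across the components that contain its private neighbors, which is not bounded by a constant per affected $a$. Either way the $O(\deg(c))$ budget is exceeded.

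The paper escapes this with Proposition~\ref{prop:Na-complete}, which you never invoke: in a $P_9$-free chordal graph every redundant vertex $a$ is partially adjacent to at most one irredundant component $C^a$, and is complete to every other component it meets. Hence the private neighbors of $a$ split into those in $C^a$ (good iff $C^a$ is dominated, checkable in $O(1)$ via a per-component counter) and those outside $C^a$ (automatically good, since a component to which $a$ is complete is dominated by $a$ itself). Two counters per element of $A$ therefore suffice, the ``component becomes dominated'' event never has to be propagated, and each affected $a$ is rechecked in $O(1)$. Without this structural fact, your per-pair buckets $L_i(a)$ do not yield the claimed delay. A secondary issue: outputting ``before the first recursive call and between consecutive ones'' would emit the same set $A$ several times; the correct trick is to output $A$ once, before the loop on even recursion depths and after the loop on odd depths.
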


\begin{proof}
	Let $C_1,\dots,C_\ell$ denote the $\ell$ irredundant components of $G$.
	For every $a\in RN(G)$, and according to Proposition~\ref{prop:Na-complete}, we note $C^a=C_i$ the unique irredundant component $C_i$ to which $a$ is partially adjacent, if it exists, and $C^a=\emptyset$ otherwise.
	Note that the computation of such components, and the identification of $C^a$ for every $a\in RN(G)$ can be done in $O(n^2)$ preprocessing time and takes $O(n^2)$ space.
	Consider $A\in \D_{RN}(G)$ as received by the algorithm.
	Let $c\in RN(G)\setminus A$.
	First observe that the condition of $c$ being a maximal generator of $A\cup \{c\}$ Line~\ref{line:RN-condition} can be implicitly verified by selecting $c$ of index greater than those in $A$.
	This can be done by computing the maximal index $\rho$ in $A$ before the loop in $O(n)$ time, and iterating on $c$ such that $c>\rho$ with no extra cost on the complexity of the loop.
	We shall show using preprocessed arrays maintained at each step of the loop that testing whether $A\cup \{c\}\in \D_{RN}(G)$ is bounded by $O(\deg(c))$.
	Note that by Corollary~\ref{cor:P7-IR-characterization}, $A\cup \{c\}$ belongs to $\D_{RN}(G)$ if and only if (i) every $a\in A$ has a private neighbor in some irredundant component $C_j\subseteq N(A\cup \{c\})$, $j\in \intv{1}{\ell}$, and (ii) there exists $C_i$, $i\in \intv{1}{\ell}$ such that $C_i\not\subseteq N(A)$ and $C_i\subseteq N(A\cup\{c\})$.
	Also, recall that by Proposition~\ref{prop:Pk-Pk-4} every component $C_1,\dots,C_\ell$ is a clique.
	Let $T_1$ be an array of size $\ell$ such that $T_1[i]=|C_i|$ for every $i\in\intv{1}{\ell}$.
	This array will be used to know the number of vertices that are yet to be dominated in every component.
	Let $T_2$ be an array of size $n$ such that $T_2[y]=i$ if $y\in C_i$, and $T_2[y]=0$ otherwise (if $y$ is redundant).
	Using these two arrays, one can access in constant time to the number of vertices that are yet to be dominated in the unique clique $C_i$ in which $y$ belongs, by checking $T_1[T_2[y]]$.
	Let $M_1$ be a two dimensional array of size $n\times 2$ such that $M_1[a][0]=|Priv_{IR}(A,a)\cap C^a|$ if $C^a\neq \emptyset$, $M_1[a][0]=-1$ otherwise, and $M_1[a][1]=|Priv_{IR}(A,a)\setminus C^a|$.
	Let $M_2$ be an array of size $n$ such that $M_2[y]=a$ if $y\in Priv_{IR}(A,a)$, and $M_2[y]=0$ otherwise.
	Let $M_3$ be an array of size $n$ such that $M_3[y]=0$ if furthermore $y\in C^a$, $M_3[y]=1$ otherwise.
	Using these three arrays, one can access in constant time to the number of irredundant private neighbors a vertex $a$ such that $y\in Priv_{IR}(A,a)$ has by checking $M_1[M_2[y]][0]$ and $M_1[M_2[y]][1]$. 
	The size of the set $Priv_{IR}(A,a)\cap C^a$ in case where $y\in C^a$, and $Priv_{IR}(A,a)\setminus C^a$ in case where $y\not\in C^a$ can be accessed by $M_1[M_2[y]][M_3[y]]$.
	Finally, consider an array $W$ of size $n$ initialized to zero. 
	This array will be used to know if a vertex $y$ is dominated by $A\cup \{c\}$, by setting $W[y]=x$ if $y$ is connected to some $x\in A\cup \{c\}$ and $W[y]=0$ otherwise.
	Note that these six arrays can be computed in $O(n^2)$ preprocessing time and $O(n^2)$ space.

	We are now ready to detail each iteration of the loop Line~\ref{line:RN-forall}. 
	When considering a new candidate vertex $c\in RN(G)\setminus A$, we do the following.
	For each $y\in N(c)\cap IR(G)$, we set $W[y]:=c$, $M_2[y]:=c$ and $T_1[T_2[y]]:=T_1[T_2[y]]-1$ whenever $W[y]=0$ (i.e., if $y$ is not dominated by $A$).
	Note that $T_1[T_2[y]]$ is decreased to zero if and only if $c$ verifies $C_i\not\subseteq N(A)$ and $C_i\subseteq N(A\cup\{c\})$ for $i=T_2[y]$.
	The next claim follows.

	\begin{claim}\label{claim:P7-1}
		Deciding whether there exists $C_i$, $i\in \intv{1}{\ell}$ such that $C_i\not\subseteq N(A)$ and $C_i\subseteq N(A\cup\{c\})$ takes $O(\deg(c))$ time.
	\end{claim}

	If~$W[y]\neq 0$ then $y$ was already dominated by $A$, and in particular it might have been the private neighbor of some $a\in A$ given by both $M_2[y]$ and $W[y]$.
	In that case (i.e., whenever $M_2[y]\neq 0$) we set $M_1[M_2[y]][M_3[y]]:=M_1[M_2[y]][M_3[y]]-1$, and $M_2[y]:=n+1$ (this value is set temporarily).
	Note that $M_1[M_2[y]][0]$ (resp.~$M_1[M_2[y]][1]$) is decreased to zero if and only if $c$ steals all the private neighbors of $a\in A$ that are in irredundant components that are partially adjacent (resp.~complete) to $a$. 
	Also, observe that we still have $W[y]=a$ for all such $y$ in that case.
	We prove the following

	\begin{claim}\label{claim:P7-2}
		Deciding whether every $a\in A$ has a private neighbor in some irredundant component $C_j\subseteq N(A\cup \{c\})$, $j\in \intv{1}{\ell}$ takes $O(\deg(c))$ time.
	\end{claim}

	\begin{claimproof}
		Consider some $y\in N(c)\cap IR(G)$ and let $a=M_2[y]$, $j=M_3[y]$.
		Observe that if both $M_1[a][0]$ and $M_1[a][1]$ have value zero after updating $M_1[a][j]:=M_1[a][j]-1$, then we answer negatively ($a$ lost all its private neighbors).
		If $M_1[a][1]$ does not equal zero, then we answer positively ($a$ has a private neighbor in a dominated irredundant component).
		If $M_1[a][1]$ equals zero and $M_1[a][0]$ does not equal zero, then we need to check whether $C^a$ is dominated or not, that is whether $T_1[T_2[y]]$ equals zero or not.
		We answer positively if it is the case, and negatively otherwise.
		This covers all possibilities and the claim follows.
	\end{claimproof}
	
	A consequence of Claims~\ref{claim:P7-1} and~\ref{claim:P7-2} is that $A\cup \{c\}\in \D_{RN}(G)$ can be decided in $O(\deg(c))$ time in the condition of Line~\ref{line:RN-condition}.
	Now, if $A\cup \{c\}\in \D_{RN}(G)$ then we set $M_2[y]=0$ whenever $M_2[y]=n+1$ ($y$ is adjacent to some $a\in A$ and $c$, it is not a private neighbor anymore), for every $y\in N(c)\cap IR(G)$ and in a time which is also bounded by $O(\deg(c))$.
	Let us overview the case where $c$ does not satisfy the conditions of Claims~\ref{claim:P7-1} and~\ref{claim:P7-2}, or when a backtrack is executed.
	First, we undo the changes by setting $W[y]=0$ and $T_1[T_2[y]]:=T_1[T_2[y]]+1$ for every $y\in N(c)\cap IR(G)$ such that $W[y]=c$ (such a $y$ was not adjacent to any $a\in A$ and no other modifications occurred).
	If $W[y]\neq c$ and $M_2[y]= n+1$ (in that case $y$ was the private neighbor of some $a\in A$, $a=W[y]$) then we set $M_2[y]=W[y]$ and $M_1[M_2[y]][M_3[y]]=M_1[M_2[y]][M_3[y]]+1$.
	If $W[y]\neq c$ and $M_2[y]\neq n+1$ then $y$ was not adjacent to $c$ and no modification occurred.
	This undo process also takes $O(\deg(c))$ time.
	Since the sum of degrees of $G$ is bounded by $O(n+m)$, the time spent in the loop Line~\ref{line:RN-forall} is bounded by $O(n+m)$.

	Let us finally consider the case when consecutive backtracks are executed.
	Observe that in that case, it could be that $n$ times $O(n+m)$ steps are computed without output.
	In order to avoid this, a common trick is to output half of the solutions while going down the recursive tree, and the other half when going up the tree.
	This is done by moving the output of Line~\ref{line:RN-output} after the loop Line~\ref{line:RN-afterloop} on odd depths of the recursive tree.
\end{proof}

\subsection{Polynomial delay implementation in $P_8$-free chordal graphs.}\label{subsec:RN-P8}

We show that \textsc{IEP} and \textsc{MGP} can be solved in polynomial time in $P_8$-free chordal graphs.
This yields a polynomial-delay implementation of Algorithm~\ref{algo:RN} in the same class.

From now on and until the end of the section, let $G$ be a $P_8$-free chordal graph.
Recall that by Proposition~\ref{prop:Pk-Pk-4}, every irredundant component $C$ of $G$ induces a graph $H=G[C]$ that is $P_4$-free chordal.
It is known that every $P_4$-free chordal graph is the comparability graph of a tree poset~\cite{wolk1962comparability}, where two vertices of the graph are made adjacent if they are comparable in the poset.
To $H$ we associate $T(H)$ its tree poset.
Note that in particular, the root of $T(H)$ is universal in $H$, and that $x\leq y$ implies $N_H[y]\subseteq N_H[x]$.
An example of a $P_4$-free chordal graph and its tree poset is given in Figure~\ref{fig:P4}.

\begin{figure}
	\center
	\caption{A $P_4$-free chordal graph $H$ and the Hasse diagram of its poset tree.
	On such instance $p=4$, $X_1=\{t_1,1,4\}$, $X_2=\{5, t_3,t_4\}$, $X_3=\{t_6\}$, $X_4=\{t_7\}$ and $Y=\{3\}$.
	Then $F=\{t_2,2,t_5\}$ and a set $D$ such that $D$ dominates $C\setminus (X\cup Y)$ but not $X_1,\dots,X_4$ is given by $D=\{t_2,t_3,t_5\}$.}\label{fig:P4}
	\includegraphics{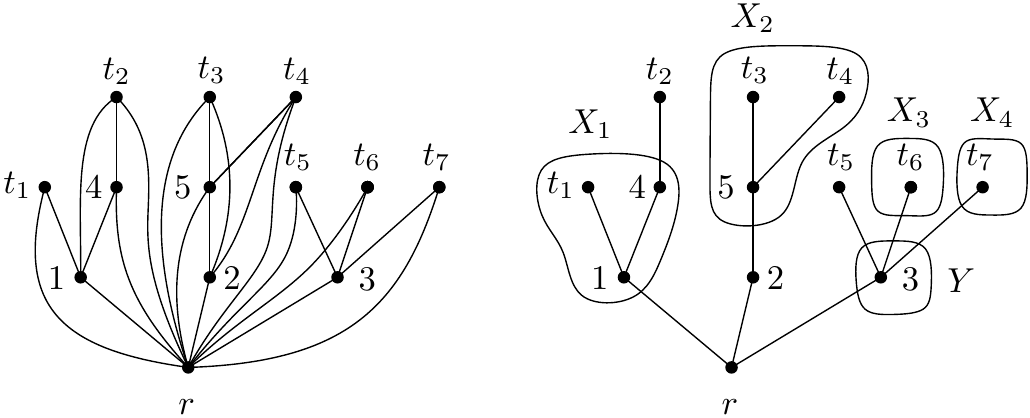}
\end{figure}

In the following, let $A\subseteq RN(G)$ and $C$ be an irredundant component of $G$. 
Let $x_1,\dots,x_p$ denote the $p$ elements of $R(A)$ having a private neighbor in $C$, $X_j=Priv_{IR}(A, x_j)$ for every $j\in \intv{1}{p}$, $X=X_1\cup\dots\cup X_p$ and $Y=(N(A)\cap C)\setminus X$ (this last set correponds to the vertices in $C$ that are already dominated by $A$, but that are not private for any $x_i$, $i\in \intv{1}{p}$).
By Corollary~\ref{cor:P9-IR-characterization}, \textsc{IEP} can be tested independently on every such component by checking whether $X_1,\dots,X_p$ are non-empty, and, whether there exists $D\subseteq C$ such that
\begin{itemize}
	\item $D$ dominates $C\setminus (X\cup Y)$, but
	\item $D$ does not dominate any of $X_1,\dots,X_p$.
\end{itemize}
We will show that such a test can be conducted in linear time whenever $X_1,\dots,X_p$ and $Y$ are given by lists and arrays, a condition that can be fulfilled at low cost as in the implementation of Theorem~\ref{thm:P7}.
In the remaining of this section, we note $r$ the root of $T(H)$, and $F$ the maximal elements of $T(H)$ which are neither in $X_1,\dots,X_p$ nor in $Y$ (hence no two elements of $F$ are comparable in $T(H)$).
One such instance is given in Figure~\ref{fig:P4}.

\begin{lemma}\label{lem:P4-F}
	Let $D$ be a subset of vertices of $H$.
	Then $D$ dominates $C\setminus (X\cup Y)$ if and only if it dominates $F$.
\end{lemma}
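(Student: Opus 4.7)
The plan is to exploit the tree-poset structure of $T(H)$, which for a $P_4$-free chordal graph encodes neighborhoods monotonically: $x \leq y$ in $T(H)$ implies $N_H[y] \subseteq N_H[x]$, as stated in the excerpt. One direction of the equivalence is immediate from the definition of $F$: since $F$ consists of elements of $T(H)$ that lie outside $X \cup Y$, we have $F \subseteq C \setminus (X \cup Y)$, and so any $D$ dominating the latter a fortiori dominates the former.

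For the reverse direction, suppose $D$ dominates $F$ and fix an arbitrary $v \in C \setminus (X \cup Y)$. I would consider the set
\[
S_v = \{\, w \in C\setminus (X\cup Y) : w \geq v \text{ in } T(H)\,\},
\]
which is non-empty since it contains $v$. Pick any $w \in S_v$ that is maximal in $T(H)$ among $S_v$. I then claim $w \in F$: if some $w' > w$ lay in $C\setminus (X\cup Y)$, then $w' \geq v$ would place $w'$ in $S_v$, contradicting the choice of $w$. Hence $w$ is a maximal element of the subposet $T(H)\setminus(X\cup Y)$, i.e.\ $w \in F$.

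Since $D$ dominates $w \in F$, there exists $d \in D \cap N_H[w]$. Applying the monotonicity $v \leq w \Rightarrow N_H[w] \subseteq N_H[v]$, we obtain $d \in N_H[v]$, so $d$ dominates $v$, as needed. I do not foresee a real obstacle: the argument is a direct application of the neighborhood-inclusion property of tree posets, and the only delicate point is recognising that a $T(H)$-maximal element above $v$ inside $C\setminus(X\cup Y)$ is genuinely a maximal element of the subposet, hence belongs to $F$; the one-line contradiction above handles this.
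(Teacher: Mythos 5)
Your proof is correct and follows essentially the same route as the paper's: both arguments reduce the claim to showing that every vertex of $C\setminus(X\cup Y)$ lies below some element of $F$ in $T(H)$, and then transfer domination downward (you via the inclusion $N_H[w]\subseteq N_H[v]$ for $v\leq w$, the paper via comparability along the root-to-$w$ path, which is the same fact). Your explicit verification that a maximal element of $S_v$ is maximal in the whole subposet $C\setminus(X\cup Y)$, hence in $F$, is a step the paper leaves implicit.
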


\begin{proof}
	The first implication trivially holds since $F$ is selected in $C\setminus (X\cup Y)$.
	Now since every vertex of $C\setminus (X\cup Y)$ belongs to a path in $T(H)$ from $r$ to some $x\in F$, $F$ dominates $C\setminus (X\cup Y)$.
	Consider any $x\in F$ and some dominating set $D$ of $F$.
	Since $D$ dominates $F$, either $x\in D$, or there exists $y\in D$ such that either $x<y$, or $x>y$.
	In all such cases, the unique path from $r$ to $x$ in $T(H)$ is dominated.
	Hence $D$ dominates $C\setminus (X\cup Y)$.
\end{proof}

\begin{lemma}\label{lem:IEP-characterization}
	There exists a set $D$ dominating $C\setminus (X\cup Y)$ and not $X_1,\dots,X_p$ if and only if for every $x\in F$ there exists a leaf $t$ of $T(H)$ such that $x\leq t$ and $X_i\not\subseteq N_H[t]$, $i\in \intv{1}{p}$.
\end{lemma}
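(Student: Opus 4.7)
The plan is to prove both implications separately, using leaves of the tree poset $T(H)$ as the bridge between dominating sets in $H=G[C]$ and the leaf condition of the lemma.

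For the forward direction, I will take $D$ with the stated properties and, for each $x\in F$, produce the leaf $t$ witnessing the condition. By Lemma~\ref{lem:P4-F}, $D$ dominates $F$, so there is some $d\in D$ dominating $x$; in $T(H)$ this means $d$ is comparable to $x$. Letting $t$ be any leaf of $T(H)$ above $\max(x,d)$ in the tree order gives $x\leq t$ and $d\leq t$, and the property recalled in the paper that $x\leq y$ implies $N_H[y]\subseteq N_H[x]$ yields $N_H[t]\subseteq N_H[d]$. For each $i$, the non-domination of $X_i$ by $D$ provides some $v_i\in X_i\setminus N_H[d]$, which then also lies in $X_i\setminus N_H[t]$, so $t$ is the required leaf.

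For the backward direction I will set $D=\{t_x:x\in F\}$, where each $t_x$ is the leaf granted by the hypothesis. Since $x\leq t_x$ implies $x\in N_H[t_x]$, $D$ dominates $F$, and Lemma~\ref{lem:P4-F} upgrades this to domination of $C\setminus(X\cup Y)$. The delicate part is showing that $D$ does not dominate any $X_i$: assume toward contradiction that $X_i\subseteq N_H[D]$, i.e., every $v\in X_i$ lies on the chain from the root $r$ to some $t_x$. By Proposition~\ref{prop:Na-connected}, $N(x_i)\cap C$ is connected in $H$, and since the comparability graph of a tree poset admits a unique minimum on any connected subset, I obtain an element $m_i$ with $m_i\leq w$ for every $w\in N(x_i)\cap C$, in particular for every $w\in X_i$. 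For any $v\in X_i$ to be covered by a $t_x$ one then needs $m_i\leq v\leq t_x$, so I can restrict attention to $F^{*}:=\{x\in F:m_i\leq t_x\}$. The plan is then to combine the maximality of each $x\in F$ in $C\setminus(X\cup Y)$ with the per-leaf guarantee $X_i\not\subseteq N_H[t_x]$ to pinpoint, in some branch of the subtree rooted at $m_i$, an element of $X_i$ that is missed by every $t_x$ with $x\in F^{*}$, contradicting $X_i\subseteq N_H[D]$.

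The main obstacle will be exactly this last combinatorial step in the backward direction. Without the connectivity supplied by Proposition~\ref{prop:Na-connected}, the union $\bigcup_{x\in F}N_H[t_x]$ could in principle cover some $X_i$ even though each individual $N_H[t_x]$ misses it, so the interplay between the tree structure of $T(H)$, the existence of $m_i$, and the per-$x$ choice of leaves is precisely where the specific $P_8$-free chordal hypothesis must be put to work.
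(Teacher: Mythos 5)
Your forward direction is correct and is essentially the paper's argument: dominate each $x\in F$ via Lemma~\ref{lem:P4-F}, pass to a leaf $t$ above both $x$ and its dominator $d$, and use the monotonicity $d\leq t\Rightarrow N_H[t]\subseteq N_H[d]$ to transfer the witness $v_i\in X_i\setminus N_H[d]$ to $t$. The first half of your backward direction (that $D=\{t_x : x\in F\}$ dominates $C\setminus(X\cup Y)$) is also fine and matches the paper.

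The problem is the second half of the backward direction, which you explicitly leave as a ``plan.'' This is not a presentational gap but the actual mathematical content that is missing: you must rule out that $\bigcup_{x\in F}N_H[t_x]$ covers some $X_i$ even though each chain $N_H[t_x]$ misses it. The paper closes this via Lemma~\ref{lem:IEP-independently}, whose proof uses Proposition~\ref{prop:Na-connected} to establish a trichotomy for each $X_j$: either (i) $X_j$ lies entirely strictly above a \emph{unique} $x\in F$ --- and then, since $F$ is an antichain, no other chain $N_H[t_{x'}]$ can meet $X_j$ at all, so ``covered by the union but missed by each chain'' is impossible; or (iii) $X_j$ has an element comparable to no element of $F$, which therefore lies on no chain to a leaf above $F$ and is never dominated; or (ii) $X_j$ lies entirely strictly below elements of $F$, in which case \emph{every} set dominating $F$ dominates $X_j$ and no valid $D$ exists. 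Your plan --- restrict to $F^{*}=\{x : m_i\leq t_x\}$ and ``pinpoint an element of $X_i$ missed by every $t_x$'' --- cannot succeed in case (ii): there the union of chains genuinely covers $X_i$ and there is no missed element to find; the resolution is not a better choice of witnesses but the observation that this case must be excluded (the paper's algorithm in Lemma~\ref{lem:IEP-algorithm} tests $X_j\subseteq F^{-}$ separately and answers negatively before ever invoking the leaf characterization). So to complete your proof you need the antichain property of $F$ together with the uniqueness statement of case (i), i.e., essentially a proof of Lemma~\ref{lem:IEP-independently}, rather than the minimum element $m_i$ alone; and you need to say explicitly how case (ii) is reconciled with the ``if'' direction of the equivalence.
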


\begin{proof}
	We show the first implication.
	Let $D$ be a dominating set of $C\setminus (X\cup Y)$ which does not dominate $X_1,\dots,X_p$.
	By Lemma~\ref{lem:P4-F}, every $x\in F$ is dominated by some $y\in D$.
	Consider some such $x$ and $y$.
	Then one of $x\leq y$ or $y<x$ holds. 
	Let $t$ be a leaf of $T(H)$ such that $y\leq t$ and $x\leq t$.
	Then $t$ does not dominate any of $X_1,\dots,X_p$ as $y$ does not, and $y\leq t$ hence $N_H[y]\supseteq N_H[t]$.
	Since $x\leq t$ the first implication follows.
	As for the other implication, is it a consequence of Lemma~\ref{lem:P4-F} observing that every leaf $t$ such that $x\leq t$ for some $x\in F$ dominates $x$.
\end{proof}

The next lemma shows that the characterization of Lemma~\ref{lem:IEP-characterization} can be checked for each element $x\in F$ independently.

\begin{lemma}\label{lem:IEP-independently}
	Consider $X_j$, $j\in \intv{1}{p}$.
	Then, either
	\begin{itemize}
		\item $X_j\subseteq \{y\in C \mid x<y\}$ for some unique $x\in F$, or
		\item $X_j\subseteq \{y\in C \mid y<x\ \text{for some}\ x\in F\}$ and \textsc{IEP} can be answered negatively, or
		\item $X_j\not\subseteq N_H[F]$ and it can be ignored when checking the characterization of Lemma~\ref{lem:IEP-characterization}.
	\end{itemize}
\end{lemma}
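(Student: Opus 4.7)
I classify each $y \in X_j$ by its position in the tree poset $T(H)$ relative to the antichain $F$: $y$ is either strictly greater than some (necessarily unique) $x \in F$ in $T(H)$ (type $a$), strictly smaller than some $x \in F$ (type $b$), or incomparable to every element of $F$ (type $c$). These three types are mutually exclusive for a single $y$, since $y > x$ together with $y < x'$ for $x, x' \in F$ would yield $x < y < x'$ and make $x, x'$ comparable, contradicting $F$ being an antichain. The plan is then to show that the distribution of types across $X_j$ forces exactly one of the three cases of the lemma.

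For case~3, I take any $y \in X_j$ of type $c$. For every leaf $t$ of $T(H)$ with $x \leq t$ for some $x \in F$, the root-to-$t$ chain in $T(H)$ contains $x$; hence $y \leq t$ would place $y$ on this chain alongside $x$ and force their comparability, a contradiction, while $t \leq y$ forces $t = y$ as $t$ is a leaf, giving $y \geq x$, again a contradiction. Thus $y \notin N_H[t]$ for every relevant $t$, so the constraint on $X_j$ imposed by Lemma~\ref{lem:IEP-characterization} is vacuously satisfied and $X_j$ may be ignored.

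For case~2, I assume every $y \in X_j$ is of type $b$ and show that \textsc{IEP} is negative. Let $D \subseteq C$ be any candidate dominating $F$, pick any $y \in X_j$ with $y < x_y \in F$, and take $d \in D$ comparable to $x_y$ in $T(H)$ (which exists since $D$ dominates $x_y$). If $d \leq x_y$, then $d$ and $y$ both lie on the root-to-$x_y$ chain and are comparable in $T(H)$, hence adjacent in $H$; if $d \geq x_y$, then $d > y$ and again $d \sim y$. In both cases $y \in N_H[D]$, contradicting the requirement that $D$ does not dominate $X_j$.

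For case~1, suppose some $y_1 \in X_j$ is of type $a$ with witness $x_1 \in F$. Uniqueness of $x_1$ follows from $F$ being an antichain (else two witnesses would be comparable common ancestors of $y_1$). What remains is ruling out the mixed configurations (i)~some $y_2 \in X_j$ with $y_2 > x_2$ for $x_2 \neq x_1$ in $F$, and (ii)~some $y_2 \in X_j$ with $y_2 < x_2$ for some $x_2 \in F$. In configuration~(i), the vertices $y_1, y_2$ sit in disjoint subtrees of $L = \mathrm{LCA}(x_1, x_2)$, so $y_1 \not\sim y_2$ and the sequence $x_1 - y_1 - x_j - y_2 - x_2$ is an induced $P_5$ in $G$: the non-edges follow from $x_1, x_2 \in F$ not being adjacent to $x_j$ (they lie outside $N(A)$), from $y_1, y_2$ being private neighbors of $x_j$, and from the disjoint-subtree structure. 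Applying Proposition~\ref{prop:IR-path} to the adjacent irredundant pairs $(x_1, y_1)$ and $(x_2, y_2)$ produces extension vertices $u', u''$ and $p', p''$ on each side of this $P_5$; chordality of $G$ rules out each potential chord between the extensions and the interior of the path (any such chord would close a short chordless cycle, contradicting chordality), yielding an induced path on at least $8$ vertices in $G$ and contradicting $P_8$-freeness. Configuration~(ii) is handled analogously, using the triangle $\{y_1, y_2, x_1\}$ when $x_2 = x_1$ or an analogous ancestry structure when $x_2 \neq x_1$ to again produce a forbidden induced $P_8$. The main obstacle is this last step: selecting the extension vertices from Proposition~\ref{prop:IR-path} and verifying, via repeated chordal arguments, that no chord between extensions and the interior of the path can survive, which is the technically demanding part of the proof.
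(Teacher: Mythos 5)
Your three-way classification of the elements of $X_j$ relative to the antichain $F$, your handling of type-$c$ elements (case~3), and your argument that an all-type-$b$ set forces a negative answer to \textsc{IEP} (case~2) are all correct and close in spirit to the paper. Your treatment of configuration~(i) — two type-$a$ elements $y_1>x_1$, $y_2>x_2$ with $x_1\neq x_2$ — is a genuinely different route: you build the induced $P_5$ given by $x_1,y_1,a,y_2,x_2$ (writing $a$ for the redundant vertex with $X_j=Priv_{IR}(A,a)$) and extend it via Proposition~\ref{prop:IR-path} to contradict $P_8$-freeness, whereas the paper argues via Proposition~\ref{prop:Na-connected} that such an $x$ would lie in $N(a)\cap C=X\cup Y$, contradicting $x\in F$. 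That part of your argument is fine (the $P_5$ is indeed induced, and the chord-elimination for the extension is the same standard chordality argument the paper uses in Propositions~\ref{prop:Pk-Pk-4} and~\ref{prop:Na-complete}).

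The gap is configuration~(ii), which you dispatch in one sentence. Take $x_2=x_1$, i.e.\ $y_2<x_1<y_1$ in $T(H)$ with $y_1,y_2\in X_j$ and $x_1\in F$. Then $\{y_2,x_1,y_1\}$ is a chain, hence a triangle in $H$, and $a$ is adjacent to both $y_1$ and $y_2$ but not to $x_1$; the four vertices induce a $K_4$ minus the single edge $ax_1$. There is no induced path of length even two through $a$ here (its two neighbors $y_1,y_2$ are adjacent), so there is no backbone to extend, and ``use the triangle to produce a forbidden induced $P_8$'' is not analogous to configuration~(i) — a triangle cannot sit inside an induced path. Nor is the configuration excluded by chordality alone: the induced subgraph on $\{a,y_1,y_2,x_1\}$ is chordal, and $N(a)\cap C\supseteq\{y_1,y_2\}$ being a chain satisfies Proposition~\ref{prop:Na-connected} vacuously (the only induced $y_1$--$y_2$ path is the edge). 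This is precisely the situation in which the trichotomy could fail ($X_j$ would contain an element above $x_1$ and one below $x_1$, both in $N_H[F]$), so ruling it out is the real content of the lemma; your proposal leaves it open, and the same difficulty recurs in the sub-case $x_2\neq x_1$ whenever $y_2$ is an ancestor of $x_1$. To close the gap you must either exhibit the forbidden induced path explicitly in this mixed ancestor/descendant configuration (it is not clear one exists from the local structure alone) or switch to the paper's structural argument through $T(H)$ and Proposition~\ref{prop:Na-connected}.
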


\begin{proof}
	Let $X_j$, $j\in \intv{1}{p}$.
	First note that if $X_j\subseteq \{y\in C \mid x<y\}$ then such a $x$ is unique or else $T(H)$ is not a tree.
	Let us assume that $X_j\subseteq \{y\in C \mid y<x\ \text{for some}\ x\in F\}$.
	Observe that these two cases are disjoint as otherwise there exist two elements of $F$ that are comparable, a contradiction.
	Recall that by Proposition~\ref{lem:P4-F} every dominating set $D$ of $C\setminus (X\cup Y)$ dominates $F$.
	Now if $D$ dominates $F$ then it dominates every $y\in C$ such that $y<x$ for some $x\in F$, and $X_j$ consequently.
	In that case \textsc{IEP} can be answered negatively.

	Let us now assume that $X_j$ is not of the first two cases.
	We first show by contradiction that there is no $x\in F$ such that $x<y$ for some $y\in X_j$. 
	Suppose that there exist two such $x$ and $y$. 
	Since $X_j$ is not of the first case there must be some $y'$, $y'\neq y$ such that $x\not<y'$.
	Consider $a\in R(A)$ such that $X_j=Priv_{IR}(A,a)$.
	Note that $x$ belongs to a shortest path from $y$ to $y'$ in $C$ (the common ancestor of $y$ and $y'$ in $T(H)$ must be smaller than $x$).
	By Proposition~\ref{prop:Na-connected}, $x$ belongs to $N(a)\cap C$.
	Hence it either belongs to $X_j$, or $Y$, contradicting the fact that $x\in F$.
	Consequently, and since $X_j$ is not of the second case, $X_j\not\subseteq N_H[F]$.
	Now, since the leaves of $T(H)$ selected in Lemma~\ref{lem:IEP-characterization} are of neighborhood included in that of $F$, $X_j$ can be ignored.
\end{proof}

\begin{lemma}\label{lem:IEP-algorithm}
	There is an $O(n+m)$ time algorithm solving \textsc{IEP} whenever $G$ is $P_8$-free chordal, where $n$ and $m$ respectively denote the number of vertices and edges in $G$, whenever 
	\begin{itemize}
		\item the leaves of $T(H=G[C])$,
		\item the predecessors and the successors of every $x\in T(H)$, and
		\item each of the sets $X_1,\dots,X_p$ and $Y$
	\end{itemize}
	are given by lists and arrays for every irredundant component $C$ of $G$.
\end{lemma}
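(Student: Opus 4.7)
The plan is to solve \textsc{IEP} component by component, as justified by Corollary~\ref{cor:P9-IR-characterization} (recall that a $P_8$-free graph is also $P_9$-free). The global check that every $a\in A$ has an irredundant private neighbor is done once in $O(n)$ time by scanning the given private-neighborhood lists. Per component $C$ with tree poset $T(H)$, Lemma~\ref{lem:IEP-characterization} reduces the problem to verifying, for every $x\in F$, the existence of a leaf $t$ of $T(H)$ with $x\leq t$ and $X_i\not\subseteq N_H[t]$ for all $i$. An early negative answer on any component (or any symptom of case~(II) of Lemma~\ref{lem:IEP-independently}) is inherited by the whole instance.

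Inside a component, I would first run a single depth-first traversal of $T(H)$ using the given parent/child information in order to precompute: enter/exit times so that ancestor tests take constant time; the leaf count $\mathrm{leaves}(v)$ in the subtree rooted at $v$; the set $F$ of maximal elements of $T(H)\setminus (X\cup Y)$ (using the indicator arrays for $X_1\cup\dots\cup X_p$ and $Y$); the map $\phi(v)$ returning the unique element of $F$ on the root-to-$v$ path when it exists (well defined since $F$ is an antichain); and the boolean $\psi(v)$ indicating whether $v$ is a strict ancestor of some element of $F$. These are computed with one top-down and one bottom-up pass in $O(|C|)$ time. Each $X_j$ is then classified according to Lemma~\ref{lem:IEP-independently} in $O(|X_j|)$ time: letting $M_j$ be its deepest element, I check (a) whether every $y\in X_j$ is an ancestor of $M_j$ (chain test); (b) whether all $\phi(y)$, $y\in X_j$, are defined and equal (case~(I), with common value $x\in F$); (c) whether $\psi(y)$ holds for every $y\in X_j$ (case~(II)). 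Case~(II) aborts with a negative answer; in case~(I) with $X_j$ a chain the vertex $M_j$ is flagged as a \emph{forbidden root}; in case~(I) without the chain property, and in case~(III), the set $X_j$ is simply discarded (it is then incapable of being contained in $N_H[t]$ for any leaf $t$ below the relevant $x\in F$).

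A second traversal of $T(H)$ then propagates the ``forbidden'' flag to all descendants of each flagged $M_j$ and, in the same pass, computes for every vertex $v$ the number of non-forbidden leaves in its subtree. The component returns YES iff every $x\in F$ has at least one non-forbidden leaf below it. Two traversals of $T(H)$ cost $O(|C|)$, the per-$X_j$ tests cost $O(|X_j|)$, and since the $X_j$'s are pairwise disjoint subsets of $C$ (being distinct private neighborhoods), the total per-component work is $O(|C|)$. Summing over the (vertex-disjoint) irredundant components plus the one-time reading of the adjacency information yields the announced $O(n+m)$ bound.

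The main obstacle I expect is organizing the classification of each $X_j$ so that the three cases of Lemma~\ref{lem:IEP-independently}, together with the chain-versus-non-chain distinction inside case~(I), are detected in amortized $O(|X_j|)$ using only the constant-time ancestor primitive provided by the enter/exit times and the precomputed functions $\phi$ and $\psi$; without those primitives, a naive implementation would revisit ancestors repeatedly and lose linearity.
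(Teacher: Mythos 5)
Your proposal is correct and follows essentially the same approach as the paper: per irredundant component, compute $F$, classify each $X_j$ according to Lemma~\ref{lem:IEP-independently} (aborting on case~(II), marking as forbidden the leaves above the maximum of a chain $X_j$, discarding the rest), and answer positively iff every $x\in F$ has an unmarked descendant leaf. The only differences are implementation details --- you use a DFS with enter/exit times and the precomputed maps $\phi,\psi$ for constant-time ancestor tests, where the paper detects chains via pairwise adjacency inside each $X_j$ and builds $F^-$ by iterating over predecessors --- both yielding the same $O(n+m)$ bound.
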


\begin{proof}
	Let us first focus on an irredundant component $C$ of $G$, and $H=G[C]$ its induced subgraph.
	We want to decide whether $F$ can be dominated without dominating $X_1,\dots,X_p$.
	Note that by assumption, the leaves of $T(H)$, the predecessors and the successors of every $x\in T(H)$, and each of the sets $X_1,\dots,X_p$ and $Y$ can be iterated in a time which is bounded by their size.
	Furthermore, deciding whether a vertex belongs to one given set takes constant time.
	The same assumptions hold for the set $Z=C\setminus (X\cup Y)$ which can be computed in $O(n_H)$ time iterating on $X_1,\dots,X_p$ and $Y$. 

	The algorithm proceeds as follows.
	First it computes $F$ by checking for every $x\in Z$ whether it has a successor in $Z$.
	It computes the set $F^-=\{y\in C \mid y<x\ \text{for some}\ x\in F\}$ in a $n$-element array by adding predecessors of every $x\in F$ at a time.
	Since the sum of degrees of $H$ is bounded by $O(n_H+m_H)$, this takes $O(n_H+m_H)$ time.
	Then it tests for every set $X_1,\dots,X_p$ whether it is included in $F^-$ within the same time.
	At this stage if we find an inclusion then we can answer negatively according to the second item of Lemma~\ref{lem:IEP-independently}, and can consider $X_1,\dots,X_p$ to be of the first type in the following.

	For every set $X_j$, $j\in \intv{1}{p}$ we check whether it has two non-adjacent vertices.
	This is done in $O(n_H+m_H)$ time by testing for every vertex in $X_j$ whether it has a neighbor in $X_j$, recalling that every such $X_j$ is disjoint (we iterate through vertices and their neighborhood only once).
	If $X_j$ has no two non-adjacent vertices, then it is a path in $T(H)$ and we mark in a $n_H$-element array the indexes of leaves that are greater that its maximal element (each of these leaves dominates $X_j$).
	Similarly, computing the maximal element of every such $X_j$, and the indexes of leaves that are greater that their maximal element, can be done in $O(n_H+m_H)$ time.
	If a set $X_j$, $j\in \intv{1}{p}$ has two non-adjacent vertices then no leaf $t$ of $T(H)$ dominates $X_j$ and it can be ignored for the next step.
	We now proceed as follows according to Lemma~\ref{lem:IEP-independently}.
	We check independently for every $x\in F$ if it has a descendant leaf $t$ (to each $x\in F$ corresponds disjoint sets of such leaves) which was not indexed previously.
	If it has then we answer positively. 
	If not then $x$ (hence $F$) cannot be dominated without dominating one of $X_1,\dots,X_p$ and we can answer negatively.
	
	We now need to conduct this test for every irredundant component $C$ of $G$ independently.
	Since irredundant components are subgraphs of $G$ we have that $n$ and $m$ are respectively bounded by the sums of $n_H$'s and $m_H$'s for every irredundant component $C$ of $G$ where $H=G[C]$, and the complexity follows.
\end{proof}

A corollary of Lemma~\ref{lem:IEP-algorithm} is the following, observing that $x$ is a maximal generator of $A$ if and only if $A\setminus \{y\}\not\in \D_{RN}(G)$ for any $y\in A$ of index greater than $x$, and that $n$ times $O(n+m)$ is bounded by $O(n\cdot m)$ since $G$ is connected.

\begin{corollary}\label{cor:MGP-algorithm}
	There is an $O(n\cdot m)$ time algorithm solving \textsc{MGP} whenever $G$ is $P_8$-free chordal, where $n$ and $m$ respectively denote the number of vertices and edges in $G$, and assuming the conditions of Lemma~\ref{lem:IEP-algorithm}. 
\end{corollary}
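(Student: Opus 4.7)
The plan is to reduce \textsc{MGP} to a polynomial number of \textsc{IEP} calls and then invoke Lemma~\ref{lem:IEP-algorithm}. By the definition of a maximal generator, a vertex $x \in A$ is the maximal generator of $A$ if and only if $A\setminus\{x\} \in \D_{RN}(G)$ and, for every $y \in A$ of index strictly greater than $x$, $A\setminus\{y\} \notin \D_{RN}(G)$. Since in the recursive algorithm the set $A$ is already known to lie in $\D_{RN}(G)$ (and in the typical use we are checking whether the freshly added $c$ is the maximal generator of $A\cup\{c\}$), the first condition is automatically satisfied for the candidate and only the second family of negative \textsc{IEP} tests remains to be verified.

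First I would iterate through the elements $y \in A$ whose index exceeds that of $x$; there are at most $|A|-1 < n$ such elements. For each such $y$, I would invoke the \textsc{IEP} algorithm of Lemma~\ref{lem:IEP-algorithm} on the set $A\setminus\{y\}$. If even one of these invocations returns positively, then $A\setminus\{y\} \in \D_{RN}(G)$ and $x$ is not a maximal generator; otherwise all tests are negative and $x$ is confirmed to be a maximal generator. Each invocation costs $O(n+m)$ by Lemma~\ref{lem:IEP-algorithm}, so the total is $O(n\cdot(n+m))$, which collapses to $O(n\cdot m)$ since $G$ is connected and therefore $m \geq n-1$.

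The main technical obstacle is not the counting argument but the bookkeeping: Lemma~\ref{lem:IEP-algorithm} presupposes that the lists and arrays describing the sets $X_1,\dots,X_p$ and $Y$ for every irredundant component of $G$ are already available. Passing from $A$ to $A\setminus\{y\}$ modifies these sets only through the influence of the single vertex $y$: its contribution to $N(A)\cap C$ vanishes and the irredundant private neighbors of other $a \in A\setminus\{y\}$ inside $C^y$ may grow by vertices previously captured by~$y$. I would handle this by the standard update/undo technique already used in the proof of Theorem~\ref{thm:P7-RN}, namely: before each call, scan $N(y)\cap IR(G)$ to temporarily remove $y$ from the auxiliary structures in $O(\deg(y)) = O(n+m)$ time, run \textsc{IEP}, and then restore the structures in the same time. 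This preserves the $O(n+m)$ cost per test and hence the overall $O(n\cdot m)$ bound claimed in the corollary.
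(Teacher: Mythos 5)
Your proposal is correct and follows essentially the same route as the paper: the paper likewise observes that $x$ is the maximal generator of $A$ exactly when $A\setminus\{y\}\not\in\D_{RN}(G)$ for every $y\in A$ of larger index (membership of $A\setminus\{x\}$ being already known), and charges at most $n$ invocations of the $O(n+m)$ \textsc{IEP} algorithm, giving $O(n\cdot m)$ by connectedness. The extra paragraph on updating and restoring the auxiliary arrays when passing from $A$ to $A\setminus\{y\}$ is a reasonable elaboration of bookkeeping the paper defers to the proofs of Theorems~\ref{thm:P7-RN} and~\ref{thm:P8-RN}.
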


We can thus conclude the section with the following result.

\begin{theorem}\label{thm:P8-RN}
	There is an $O(n^2\cdot m)$ delay and $O(n^2)$ space implementation of Algorithm~\ref{algo:RN} whenever $G$ is $P_8$-free chordal, where $n$ and $m$ respectively denote the number of vertices and edges in $G$.
\end{theorem}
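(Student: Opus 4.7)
The plan is to argue correctness first, and then to bound both the per-node work of \texttt{RecRNDom} and the delay by combining Lemma~\ref{lem:IEP-algorithm} with Corollary~\ref{cor:MGP-algorithm}, using a standard alternation trick to control the delay on paths of consecutive backtracks.

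For correctness, by Proposition~\ref{prop:ASC-IS} the family $\D_{RN}(G)$ is an accessible set system whenever $G$ is $P_8$-free chordal. Hence starting from $\emptyset\in \D_{RN}(G)$ (Proposition~\ref{prop:ACS-emptyset}) and extending one maximal generator at a time, every element of $\D_{RN}(G)$ is reached exactly once, so \texttt{RecRNDom} enumerates $\D_{RN}(G)$ without repetitions.

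For the per-node complexity, I first compute in $O(n^2)$ preprocessing time and $O(n^2)$ space: the irredundant components $C_1,\dots,C_\ell$, their tree posets $T(G[C_i])$ (which exist by Proposition~\ref{prop:Pk-Pk-4} and~\cite{wolk1962comparability}), together with the lists of leaves and predecessor/successor relations, and the unique $C^a$ for every $a\in RN(G)$ from Proposition~\ref{prop:Na-complete}. Throughout the recursion I maintain, for every irredundant component $C$, the family $X_1,\dots,X_p$ of irredundant private neighborhoods of the red elements of $A$ with a private neighbor in $C$, and the set $Y=(N(A)\cap C)\setminus X$, using the same style of pointer arrays ($W, M_1, M_2, M_3, T_1, T_2$) as in the proof of Theorem~\ref{thm:P7-RN}. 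When a candidate $c$ is added to $A$, these structures are updated in $O(\deg(c))=O(n+m)$ time by scanning $N(c)\cap IR(G)$, and the update is symmetrically undone on backtrack in the same time. On the same pass I also test whether every $a\in A\cup\{c\}$ still has at least one irredundant private neighbor, which is the first requirement of Corollary~\ref{cor:P9-IR-characterization}.

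Given these maintained lists, one test of \textsc{IEP} for $A\cup\{c\}$ costs $O(n+m)$ by Lemma~\ref{lem:IEP-algorithm}, and one test of \textsc{MGP} for $c$ costs $O(n\cdot m)$ by Corollary~\ref{cor:MGP-algorithm}. The body of the loop at line~\ref{line:RN-forall} is executed at most $n$ times per node, so each call to \texttt{RecRNDom} spends $O(n)\cdot(O(n+m)+O(n\cdot m)+O(n+m))=O(n^2\cdot m)$ time, excluding recursive calls. The depth of the recursion is at most $n$, so the total space used by the dynamic updates (stored as undo-logs on the stack) remains $O(n^2)$ in addition to the preprocessed arrays.

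The main delicate point is turning this per-node bound into a delay bound: after a successful output the algorithm may traverse a long chain of unsuccessful candidates and then a long chain of backtracks before producing the next solution, and a naive analysis gives only $O(n^3\cdot m)$ delay. This I resolve with the standard trick already used in the proof of Theorem~\ref{thm:P7-RN}: move the \textbf{output} instruction of line~\ref{line:RN-output} to after the loop on odd recursion depths, so that every two consecutive outputs are separated by the work of at most a constant number of nodes of the recursion tree. This yields the announced $O(n^2\cdot m)$ delay while preserving $O(n^2)$ space, completing the proof.
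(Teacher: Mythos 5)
Your proposal is correct and follows essentially the same route as the paper: both argue correctness via the accessible-set-system property, fulfill the preconditions of Lemma~\ref{lem:IEP-algorithm} by $O(n^2)$ preprocessing plus incremental maintenance of the lists $X_1,\dots,X_p$, $Y$ as in Theorem~\ref{thm:P7-RN}, combine Lemma~\ref{lem:IEP-algorithm} with Corollary~\ref{cor:MGP-algorithm} for an $O(n\cdot m)$ cost per candidate and hence $O(n^2\cdot m)$ per node, and invoke the alternating-output trick to convert the per-node bound into a delay bound. Your write-up is in fact somewhat more explicit than the paper's (e.g., on the undo-logs and on why a naive analysis only gives $O(n^3\cdot m)$), but there is no substantive difference in approach.
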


\begin{proof}
	By~Lemma~\ref{lem:IEP-algorithm} and Corollary~\ref{cor:MGP-algorithm}, there is an $O(n^2\cdot m)$ delay implementation of Algorithm~\ref{algo:RN} whenever the assumptions of Lemma~\ref{lem:IEP-algorithm} can be fulfilled at every step of the loop Line~\ref{line:RN-forall}.
	Clearly, the representation $T(H)$ of $H=G[C]$ can be computed for every irredundant component $C$ of $G$ in $O(n^2)$ preprocessing time, and $O(n^2)$ space.
	The lists and arrays containing the leaves of $T(H)$, the predecessors and the successors of every $x\in T(H)$, and that will contain the sets $X_1,\dots,X_p$, $Y$ and $Z=C\setminus (X\cup Y)$ at each step of loop Line~\ref{line:RN-forall} can also be computed within these preprocessing-time and space complexities.
	Furthermore, the sets $X_1,\dots,X_p$ and $Y$ can be maintained at each step of loop as in the proof of Theorem~\ref{thm:P7-RN}, and in a time which is clearly upper-bounded by $O(n\cdot m)$ for each $c\in RN(G)\setminus A$.
	We proceed as in the proof of Theorem~\ref{thm:P7-RN} to maintain an $O(n^2\cdot m)$ delay in case of consecutive backtrack.
	The theorem follows.
\end{proof}

\section{Enumerating irredundant extensions}\label{sec:irredundant}

This section is devoted to the complexity analysis of Line~\ref{line:main-forallIR} of Algorithm~\ref{algo:main}.
More precisely, we show that irredundant extensions can be enumerated with linear and polynomial delays in $P_7$-free and $P_8$-free chordal graphs.
This allows us to conclude with the two main results of this paper.

\subsection{Irredundant extensions in $P_7$-free chordal graphs.}

Let $G$ be a $P_7$-free chordal graph and $A\in \D_{RN}(G)$.
Recall that by Corollary~\ref{cor:P7-IR-characterization}, every $x\in A$ has a private neighbor in some irredundant component $C\subseteq N(A)$.
Let $C_1,\dots,C_k$ denote the $k$ irredundant components of $G$ that are not dominated by $A$.
By Proposition~\ref{prop:Pk-Pk-4}, every such component is a clique.
Consequently we have that
\[
	\DIR(A)=	\{\{x_1,\dots,x_k\} \mid x_i\in C_i,\ i\in\intv{1}{k}\}.
\]
Now, such a set can clearly be enumerated with $O(n+m)$ delay given $A$ and $C_1,\dots,C_k$.
Furthermore, a track of these irredundant components is maintained at each step of the loop Line~\ref{line:RN-forall} of Algorithm~\ref{algo:RN} in the implementation of Theorem~\ref{thm:P7-RN}.
We conclude with the following theorem which improves a previous result of Kant\'e et al.~in~\cite{kante2014enumeration} on $P_6$-free chordal graphs.

\begin{theorem}\label{thm:P7}
	There is an $O(n+m)$ delay, $O(n^2)$ space and $O(n^2)$ preprocessing-time algorithm enumerating $\D(G)$ whenever $G$ is $P_7$-free chordal, where $n$ and $m$ respectively denote the number of vertices and edges in $G$.
\end{theorem}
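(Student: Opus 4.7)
The plan is to plug the irredundant extension step into the framework already established by Algorithm~\ref{algo:main}, using Theorem~\ref{thm:P7-RN} as a black box for the outer enumeration of $\D_{RN}(G)$ and adding a controlled inner enumeration of $\DIR(A)$.

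The starting point is the structural description given by Corollary~\ref{cor:P7-IR-characterization}: for $A\in \D_{RN}(G)$, every $a\in A$ already has an irredundant private neighbor in some irredundant component $C\subseteq N(A)$, so only the irredundant components not yet dominated by $A$ need to be covered by an extension. Moreover, by Proposition~\ref{prop:Pk-Pk-4} applied with $k=7$, each irredundant component of $G$ is $P_3$-free chordal, i.e.\ a clique. Writing $C_1,\dots,C_k$ for the irredundant components not dominated by $A$, picking exactly one vertex per $C_i$ yields a minimal irredundant extension, and every element of $\DIR(A)$ arises in this way. Hence
\[
	\DIR(A)=\{\{x_1,\dots,x_k\}\mid x_i\in C_i,\ i\in \intv{1}{k}\},
\]
a Cartesian product, which admits a straightforward $O(n)$-delay enumeration (for instance by nested loops with a Gray-code-style traversal, incrementing one coordinate per step and swapping one vertex for another in the current solution).

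The second ingredient is tracking, along the recursion of Algorithm~\ref{algo:RN}, the list of undominated irredundant components at the current $A$. This essentially comes for free from the implementation of Theorem~\ref{thm:P7-RN}: the array $T_1$ already stores, for each irredundant component $C_i$, the number of its vertices that are not yet dominated by $A$, and $T_1[i]$ reaches $|C_i|$ exactly when $C_i$ is undominated. Maintaining a doubly linked list of undominated components, updated in $O(\deg(c))$ time whenever the array $T_1$ is updated as $c$ is added to or removed from $A$, costs no more than the $O(n+m)$ bound of Theorem~\ref{thm:P7-RN}, and allows the inner loop to access $C_1,\dots,C_k$ in time proportional to their sizes.

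Putting it together: whenever Algorithm~\ref{algo:RN} produces an $A\in \D_{RN}(G)$, we enumerate the associated Cartesian product above with delay $O(n+m)$ (each output has size $O(n)$, and the transition between two consecutive extensions costs $O(1)$ modifications plus the linear cost of emission); between the last extension of $A$ and the first extension of the next $A'$, the $O(n+m)$ delay of Theorem~\ref{thm:P7-RN} together with the $O(n+m)$ setup of $C_1,\dots,C_k$ keeps the total gap within $O(n+m)$. The preprocessing and space bounds are inherited from Theorem~\ref{thm:P7-RN}. The only real subtlety, and the main thing to verify carefully, is the delay at these transitions (in particular when several consecutive backtracks of Algorithm~\ref{algo:RN} happen in a row): this is dealt with by the same trick as in Theorem~\ref{thm:P7-RN}, namely outputting extensions of $A$ on the way down on even depths and on the way up on odd depths of the recursion, so that at most $O(n+m)$ work separates any two consecutive outputs overall.
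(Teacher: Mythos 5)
Your proposal is correct and takes essentially the same route as the paper: Corollary~\ref{cor:P7-IR-characterization} together with Proposition~\ref{prop:Pk-Pk-4} yield the Cartesian-product description of $\DIR(A)$ over the undominated clique components, which is then enumerated with linear delay on top of the bookkeeping already present in the implementation of Theorem~\ref{thm:P7-RN}. (One trivial slip: a component $C_i$ needs an extension vertex exactly when $T_1[i]>0$, i.e.\ $C_i\not\subseteq N(A)$, not only when $T_1[i]=|C_i|$.)
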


\subsection{Irredundant extensions in $P_8$-free chordal graphs.}

Let $G$ be a $P_8$-free chordal graph and $A\in \D_{RN}(G)$.
By Theorem~\ref{thm:P9-IR-characterization}, the intersection of an irredundant extensions of $A$ with an irredundant component $C$ of $G$ is a minimal set $D\subseteq C$ such that
\begin{itemize}
	\item $D$ dominates $C\setminus (X\cup Y)$, and
	\item $D$ does not dominate any of $X_1,\dots,X_p$,
\end{itemize}
where $x_1,\dots,x_p$ denote the $p$ elements of $R(A)$ having a private neighbor in $C$, where $X_j=Priv_{IR}(A, x_j)$ for every $j\in \intv{1}{p}$, $X=X_1\cup\dots\cup X_p$ and $Y=(N(A)\cap C)\setminus X$.
In the following, to $A$ and $C$ we associate $\DIR(A,C)$ the set of all such minimal sets.
Then, if $C_1,\dots,C_k$ denote the $k$ irredundant components of $G$ that are not dominated by $A$, we have that
\[
	\DIR(A)= \{D_1\cup \dots \cup D_k \mid D_i\in \DIR(A,C_i),\ i\in\intv{1}{k}\}.
\]
Clearly, such a set can be enumerated with $O(n^3\cdot m)$ delay given an algorithm enumerating $\DIR(A,C_i)$ with $O(n^2\cdot m)$ delay for every irredundant component $C_1,\dots,C_k$, where $n$ and $m$ respectively denote the number of vertices and edges in $G$. 
We shall show that such an algorithm exists.

Consider $C$, $X_1,\dots,X_p$ and $Y$ as described above.
Let $H=G[C]$.
Recall that by Proposition~\ref{prop:Pk-Pk-4}, $H$ is $P_4$-free chordal.
In the remaining of this section, we rely on the notations of Section~\ref{sec:redundant} and note $r$ the root of $T(H)$ and $F$ the maximal elements of $T(H)$ which are neither in $X_1,\dots,X_p$ nor in $Y$.
One such instance is given in Figure~\ref{fig:P4}.
We call {\em irredundant component extension problem}, denoted by \textsc{ICEP}, the following decision problem.
Given $S,Q\subseteq C$, is there a solution $D\in \DIR(A,C)$ such that $S\subseteq D$ and $D\cap Q=\emptyset$?
We shall show that this problem can be solved in $O(n\cdot m)$ time, which, using the {\em backtrack search} technique, leads to an $O(n^2\cdot m)$ algorithm enumerating irredundant extensions in $P_8$-free chordal graphs.

\begin{lemma}\label{lem:ICEP}
	There is an algorithm solving \textsc{ICEP} in $O(n\cdot m)$ time, assuming the conditions of Lemma~\ref{lem:IEP-algorithm}. 
\end{lemma}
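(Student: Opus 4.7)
The plan is to extend the per-component tree-poset algorithm of Lemma~\ref{lem:IEP-algorithm} to handle the mandatory set $S$ and the forbidden set $Q$. I work on each irredundant component $C$ of $G$ independently and operate inside the tree poset $T(H)$ of $H=G[C]$, which is $P_4$-free chordal by Proposition~\ref{prop:Pk-Pk-4}. I use the chain structure of each $X_j$ in $T(H)$ from the first case of Lemma~\ref{lem:IEP-independently}; the second case is detected in the feasibility check below. All the preprocessed lists and arrays assumed in Lemma~\ref{lem:IEP-algorithm} are available throughout.

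The algorithm runs three steps per component. First, I perform two necessary feasibility checks: (i) for each $j$, verify that $X_j \not\subseteq N_H[S]$, otherwise $S$ already dominates $X_j$ and no extension can satisfy the constraint; (ii) for each $s \in S$, verify that $s$ dominates some $x \in F$ with $N_H[x] \cap S = \{s\}$, otherwise $s$ cannot retain a private witness in any minimal $D \supseteq S$. Both checks take $O(n_H + m_H)$ time using the preprocessed lists. Second, for each $j$, I compute the residual chain $X_j^S := X_j \setminus N_H[S]$ and pre-commit to a canonical witness $y_j^\star \in X_j^S$ --- for instance, an extremal element of the chain minimizing $|N_H[y_j^\star]|$ --- then mark $N_H[y_j^\star]$ as forbidden. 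This guarantees that $y_j^\star$ stays undominated in any extension consistent with the marking, so $D$ automatically does not dominate $X_j$. Third, for each $x \in F \setminus N_H[S]$, I search in $T(H)$ for a dominator $v \in N_H[x] \setminus (Q \cup \bigcup_j N_H[y_j^\star])$: failure means \emph{no}, success lets me greedily assemble $D = S \cup T$ and verify, by a last tree traversal, that each vertex of $D$ retains a private witness in $F$.

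The main obstacle is the witness pre-commitment in the second step: two vertices individually safe with respect to a chosen witness $y_j^\star$ could in principle together cover some other witness, so pre-committing to $y_j^\star$ might appear to lose generality. I overcome this by an exchange argument exploiting the chain structure of $X_j^S$: given any valid solution $D'$ using a witness $\tilde y_j \in X_j^S$ different from $y_j^\star$, the tree-path between $\tilde y_j$ and $y_j^\star$ lies entirely inside $X_j^S$, and any vertex of $D'$ lying in $N_H[y_j^\star] \setminus N_H[\tilde y_j]$ must be off this chain and can be swapped for an equivalent dominator in a sibling subtree without affecting the coverage of $F$, since $F$ is an antichain of $T(H)$ disjoint from $X_j^S$. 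Summing the linear-time per-component work over all irredundant components and accounting for the $O(n)$ overhead of maintaining the $S$- and $Q$-dependent arrays (as in the proof of Theorem~\ref{thm:P8-RN}) yields the claimed $O(n\cdot m)$ bound.
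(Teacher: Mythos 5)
Your overall architecture (work per irredundant component in the tree poset $T(H)$, then search independently for each $x\in F$ a dominator avoiding $Q$ and a family of forbidden neighborhoods) matches the paper's, but the central device --- pre-committing to one witness $y_j^\star\in X_j^S$ and forbidding $N_H[y_j^\star]$ --- is where the proposal has a genuine gap. The constraint is only that $D$ leave \emph{some} vertex of $X_j$ undominated, and fixing that vertex in advance is without loss of generality only if $N_H[y_j^\star]\subseteq N_H[w]$ for every $w\in X_j$, i.e.\ if $X_j$ is a chain of $T(H)$ and $y_j^\star$ is its maximal element. You assert the chain structure ``from the first case of Lemma~\ref{lem:IEP-independently}'', but that case only places $X_j$ inside the subtree above a unique $x\in F$, which is not a chain in general; the paper's proof of Lemma~\ref{lem:IEP-algorithm} explicitly treats the case where $X_j$ contains two non-adjacent vertices separately. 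Your exchange argument does not repair this: the claim that the tree-path between $\tilde y_j$ and $y_j^\star$ ``lies entirely inside $X_j^S$'' is unjustified and false in general (interior vertices of that path need only lie in $N(x_j)\cap C$ by Proposition~\ref{prop:Na-connected}, so they may belong to $Y$ or to $N_H[S]$), and the proposed swap into a ``sibling subtree'' is not shown to preserve $Q$-avoidance or the non-domination of the other $X_{j'}$. The paper sidesteps the witness issue entirely: for each candidate dominator $y$ of $x\in F$ it tests whether $X_j\subseteq N_H[y]$ for each $j$ (an $O(n)$ test per candidate over $O(n+m)$ candidates, giving $O(n\cdot m)$), which requires no choice of witness and no exchange.

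The second gap is the treatment of minimality with respect to $S$. The paper turns ``each $s\in S$ must retain a private neighbor'' into extra forbidden-to-dominate sets $X_{p+i}=Priv_H(S,s_i)\setminus (X\cup Y)$, answers negatively if one is empty, and observes that these sets are connected so that Lemma~\ref{lem:IEP-independently} still applies; the condition is thus absorbed into the same uniform test. You instead (a) require upfront a private witness of $s$ inside $F$, which is a priori stronger than $Priv_H(S,s)\setminus(X\cup Y)\neq\emptyset$ and is not proved equivalent, and (b) defer the rest to a final traversal verifying one greedily assembled $D$; a failed verification of that particular $D$ does not certify that no valid $D$ exists, so the algorithm's negative answers are unjustified as written. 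Note also that your candidate set $N_H[x]$ includes ancestors of $x$ in $T(H)$: such a vertex can dominate several elements of $F$ at once and steal the private witness reserved for some $s\in S$, exactly the cross-interaction that your per-$x$ independence claim ignores and that the paper avoids by searching among descendants of $x$ only.
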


\begin{proof}
	Observe that \textsc{IEP} restricted to a single component and \textsc{ICEP} only differ on the fact that the set $D\subseteq C$ should in addition satisfy $D\cap Q=\emptyset$ and should not dominate $Priv_H(S,s)\setminus (X\cup Y)$ for any $s\in S$, where $Priv_H(S,s)$ denotes the private neighborhood of $s\in S$ in $H$.
	In that case, $D$ can be reduced to a minimal set $D^*$ such that $S\subseteq D^*$ and $D^*\cap Q=\emptyset$.
	We show that these additional conditions can be handled at the cost of an increasing complexity, relying on the proof of Lemma~\ref{lem:IEP-algorithm}.
	Clearly, we can first answer negatively if $Priv_H(S,s)\setminus (X\cup Y)$ is empty for some $s\in S$. 
	Otherwise, the condition that $D$ does not dominate $Priv_H(S,s)\setminus (X\cup Y)$ for any $s\in S$ can be handled by adding extra sets $X_{p+i}=Priv_H(S,s_i)\setminus (X\cup Y)$ for every $s_1,\dots,s_q\in S$, and updating $X:=X\cup X_{p+1}\cup \dots \cup X_{p+q}$.
	Since these sets are connected Lemma~\ref{lem:IEP-independently} still applies.
	As for $D$ satisfying $D\cap Q=\emptyset$, we proceed as follows.
	For every $x\in F$, and instead of only checking descendant leaves of $x$, we iterate through all the descendants of $x$ and check whether it has a successor $y$ such that $y\not\in Q$, and such that $y$ does not dominate any of $X_1,\dots,X_{p+q}$.
	This can be done in $O(n\cdot m)$ time as we iterate through every such $y$ in $O(n+m)$ time, and check for each of these $y$'s whether it has some $X_j$ in its neighborhood in $O(n)$ time.
	At this stage, and according to Lemmas~\ref{lem:IEP-characterization},~\ref{lem:IEP-independently} and~\ref{lem:IEP-algorithm}, we can answer yes if and only every $x$ has such a neighbor $y$.
\end{proof}

We can now conclude using the backtrack search technique that we briefly recall now.
Formal proofs are omitted.
The enumeration is a depth-first search of a tree whose nodes are partial solutions and leaves are solutions. 
The algorithm constructs partial solutions by considering one vertex $x_i$ at a time (following some linear ordering $x_1,\dots,x_n$ of the vertices), checking at each step whether there is a final solution $D_1\in \DIR(A)$ containing $S\cup \{x_i\}$ and not intersecting $Q$, and one $D_2\in \DIR(A)$ containing $S$ and not intersecting $Q\cup \{x_i\}$.
This step is called the extension problem.
The algorithm recursively calls on such sets each time the extension is possible.
At first, $S$ and $Q$ are empty.
The delay time complexity is bounded by the depth of the tree (the number of vertices) times the time complexity of solving the extension problem, i.e., \textsc{ICEP}.
For further details on this technique, see for instance~\cite{read1975bounds,strozecki2019efficient}.

We conclude to the next lemma and theorem, noting that the conditions of Lemma~\ref{lem:IEP-algorithm} can be fulfilled as in the proof of Theorems~\ref{thm:P7-RN} and~\ref{thm:P8-RN}.

\begin{lemma}\label{lem:P8-IR}
	There is an algorithm enumerating $\DIR(A,C)$ with $O(n^2\cdot m)$ delay, assuming the conditions of Lemma~\ref{lem:IEP-algorithm}. 
\end{lemma}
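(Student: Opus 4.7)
The plan is to apply, essentially verbatim, the backtrack search recipe recalled just before the statement, using \textsc{ICEP} (Lemma~\ref{lem:ICEP}) as the oracle that drives the recursion. Fix an arbitrary linear ordering $x_1,\dots,x_{n_C}$ of the vertices of $C$, and build a binary search tree whose nodes are pairs $(S,Q)$ of disjoint subsets of $C$ representing the ``already selected'' and ``already forbidden'' vertices. The root is $(\emptyset,\emptyset)$; a leaf at depth $n_C$ makes a decision on every vertex of $C$, and in that case $S$ is a candidate element of $\DIR(A,C)$ to output.

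At a node $(S,Q)$ of depth $i$, I would invoke \textsc{ICEP} twice on the pair $(A,C)$: once to ask whether there is some $D\in\DIR(A,C)$ with $S\cup\{x_{i+1}\}\subseteq D$ and $D\cap Q=\emptyset$ (the ``include'' branch), and once with $S\subseteq D$ and $D\cap(Q\cup\{x_{i+1}\})=\emptyset$ (the ``exclude'' branch). The algorithm recurses only on the branches whose test returned yes. Correctness is immediate: by induction on depth, a node is visited if and only if at least one extension exists below it, so leaves are in bijection with $\DIR(A,C)$, and each $D\in\DIR(A,C)$ is produced exactly once since the sequence of include/exclude decisions along the root-to-leaf path is uniquely determined by $D$.

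For the delay analysis, the key invariant of backtrack search is that every internal node has at least one feasible child (otherwise its parent would not have called it). Hence the path between any two consecutive outputs has length at most twice the depth of the tree, namely $O(n_C)=O(n)$. At each visited node, the work is dominated by the two \textsc{ICEP} calls, each costing $O(n\cdot m)$ by Lemma~\ref{lem:ICEP}; amortized maintenance of the lists/arrays required by that lemma as $S$ and $Q$ grow or shrink by a single vertex along a tree edge can be done within the same bound, as in the proofs of Theorems~\ref{thm:P7-RN} and~\ref{thm:P8-RN}. Multiplying gives the claimed $O(n^2\cdot m)$ delay.

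The only delicate point I expect is ensuring that the data structures supporting \textsc{ICEP} (the lists of leaves of $T(H)$, the predecessor/successor arrays, and the sets $X_1,\dots,X_p$, $Y$) remain valid throughout the recursion even though $S$ and $Q$ are modified on every edge of the search tree. This is handled by pushing the incremental updates on a stack when descending and popping them when backtracking, exactly as done for the arrays $W$, $M_1$, $M_2$, $M_3$, $T_1$ in the proof of Theorem~\ref{thm:P7-RN}; the cost per edge is $O(\deg(x_{i+1}))$, which is absorbed by the $O(n\cdot m)$ budget of the two \textsc{ICEP} calls at that node.
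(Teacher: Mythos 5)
Your proposal is correct and follows essentially the same route as the paper, which (with formal details omitted) also obtains the lemma by running backtrack search over a fixed vertex ordering of $C$ with \textsc{ICEP} as the extension oracle, bounding the delay by the tree depth $O(n)$ times the $O(n\cdot m)$ cost of each extension test. Your added remarks on the bijection between leaves and $\DIR(A,C)$ and on stack-based maintenance of the data structures are consistent with how the paper handles the analogous bookkeeping in Theorems~\ref{thm:P7-RN} and~\ref{thm:P8-RN}.
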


\begin{theorem}\label{thm:P8}
	There is an $O(n^3\cdot m)$ delay and $O(n^2)$ space algorithm enumerating $\D(G)$ whenever $G$ is $P_8$-free chordal, where $n$ and $m$ denote the number of vertices and edges in $G$.
\end{theorem}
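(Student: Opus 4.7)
My plan is to combine Algorithm~\ref{algo:main} with the two subroutines established in Sections~\ref{sec:redundant} and~\ref{sec:irredundant}. First I would invoke the algorithm of Theorem~\ref{thm:P8-RN} to drive the outer loop of Line~\ref{line:main-forallRN}, which enumerates the redundant parts $A \in \D_{RN}(G)$ with $O(n^2 \cdot m)$ delay and $O(n^2)$ space. For each such $A$, the inner loop must enumerate the irredundant extensions $\DIR(A)$, and I would exploit the decomposition described at the start of Section~\ref{sec:irredundant}: letting $C_1,\dots,C_k$ denote the $k \leq n$ irredundant components of $G$ not dominated by $A$, we have $\DIR(A) = \{D_1 \cup \dots \cup D_k \mid D_i \in \DIR(A,C_i)\}$, so enumerating $\DIR(A)$ reduces to enumerating the Cartesian product of the $\DIR(A,C_i)$.

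By Lemma~\ref{lem:P8-IR}, each factor $\DIR(A,C_i)$ is enumerated with $O(n^2 \cdot m)$ delay. I would then use a standard Cartesian-product enumeration scheme (nested backtracking over the $k$ component enumerators, advancing one factor while holding the others fixed) to list all combinations with delay $O(k \cdot n^2 \cdot m) = O(n^3 \cdot m)$. Correctness of this step, and the fact that no two combinations are output twice, follow from the fact that the $C_i$ are pairwise disjoint so the decomposition of any $I \in \DIR(A)$ into $(D_1,\dots,D_k)$ is unique.

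Turning to the global delay of Algorithm~\ref{algo:main}: between two consecutive outputs we either remain inside the inner loop (delay $O(n^3 \cdot m)$ by the above), or we exit the inner loop and compute the next valid $A$ before restarting it (at most $O(n^2 \cdot m)$ from Theorem~\ref{thm:P8-RN}, plus $O(n^3 \cdot m)$ to emit the first extension of the new $A$). In case of a chain of consecutive backtracks in the outer enumeration, I would apply the same trick as in Theorem~\ref{thm:P7-RN}, outputting half of the solutions on the way down and half on the way up the recursion tree, which guarantees that the $O(n^3 \cdot m)$ delay bound is preserved. Correctness of the whole procedure follows from the bipartition $D = D_{RN} \sqcup D_{IR}$ for every $D \in \D(G)$, together with Corollary~\ref{cor:IR-private} and Theorem~\ref{thm:P9-IR-characterization}, which guarantee that every $D \in \D(G)$ is produced exactly once.

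Finally, I would address space and preprocessing. The outer loop already uses $O(n^2)$ space, and the tree-poset representations $T(G[C_i])$ of all irredundant components, together with the lists of leaves, predecessors and successors required by Lemma~\ref{lem:IEP-algorithm}, can be precomputed once in $O(n^2)$ space; the sets $X_1,\dots,X_p$ and $Y$ per component are maintained incrementally as in the proof of Theorem~\ref{thm:P8-RN} as $A$ evolves. The main technical point that needs care — and which I expect to be the sole subtlety — is precisely this bookkeeping: each time a vertex is added to or removed from $A$ during the outer recursion, the sets $X_j, Y$ (and hence $F, Z$) of every affected irredundant component must be updated in time compatible with the claimed delay, but this is a direct adaptation of the array-maintenance procedure of Theorem~\ref{thm:P7-RN} and fits within the $O(n^3 \cdot m)$ budget. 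Assembling these ingredients yields the stated $O(n^3 \cdot m)$ delay and $O(n^2)$ space algorithm enumerating $\D(G)$.
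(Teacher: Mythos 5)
Your proposal is correct and follows essentially the same route as the paper: the outer enumeration of $\D_{RN}(G)$ via Theorem~\ref{thm:P8-RN}, the Cartesian-product decomposition of $\DIR(A)$ over the undominated irredundant components combined with Lemma~\ref{lem:P8-IR} to get $O(k\cdot n^2\cdot m)=O(n^3\cdot m)$ delay, the odd/even-depth output trick against consecutive backtracks, and the incremental maintenance of the data structures of Lemma~\ref{lem:IEP-algorithm} within $O(n^2)$ space. The paper leaves most of these assembly details implicit, so your write-up is, if anything, more explicit than the original.
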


\section{Discussions}\label{sec:conclusion}

We investigated the enumeration of minimal dominating sets from their intersection with redundant vertices.
This technique was first introduced in~\cite{kante2014enumeration} and led to linear-delay algorithms in split and $P_6$-free chordal graphs.
We investigated generalizations of this technique to $P_k$-free chordal graphs for larger integers $k$.
In particular, we gave $O(n+m)$ and $O(n^3\cdot m)$ delays algorithms in the classes of $P_7$-free and $P_8$-free chordal graphs, where $n$ and $m$ respectively denote the number of vertices and edges in the graph.

As for $P_k$-free chordal graphs for $k\geq 9$, we now give evidence that the enumeration of $\D_{RN}(G)$ might need other techniques for $k\geq 9$, as \textsc{IEP} becomes {\sf NP}-complete.

\begin{theorem}\label{thm:NPC}
	\textsc{IEP} is {\sf NP}-complete even when restricted to $P_9$-free chordal graphs.
\end{theorem}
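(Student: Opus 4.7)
First, membership in {\sf NP} is immediate: given a candidate extension $I \subseteq IR(G)$, one can check in polynomial time that $A \cup I$ is a minimal dominating set of $G$, i.e.\ that it dominates $V(G)$ and that every vertex in $A \cup I$ has a private neighbor.

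For the hardness part, the plan is to reduce from a suitable {\sf NP}-complete problem, most naturally \textsc{3-SAT} (or equivalently \textsc{Set Cover}/\textsc{Hitting Set}). Given an instance $\varphi$ with variables $x_1,\dots,x_n$ and clauses $C_1,\dots,C_m$, we construct in polynomial time a $P_9$-free chordal graph $G$ together with a set $A \subseteq RN(G)$ such that $A \in \D_{RN}(G)$ if and only if $\varphi$ is satisfiable. By Proposition~\ref{prop:Pk-Pk-4} any irredundant component of $G$ must be $P_5$-free chordal, so the core of the reduction is to design a $P_5$-free chordal component $C$ together with forbidden private sets $X_1,\dots,X_p$ (the $Priv_{IR}(A,a_j)$) and an already-dominated set $Y$ such that there exists $D \subseteq C$ dominating $C \setminus (X \cup Y)$ while missing each $X_j$, which is exactly the existence criterion of Corollary~\ref{cor:P9-IR-characterization}.

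The construction places inside $C$ one variable gadget for each $x_i$, admitting exactly two ``useful'' configurations in any valid sub-dominator $D \subseteq C$, corresponding to assigning $x_i$ to true or false. Each clause $C_j$ is encoded by a redundant vertex $a_j \in A$ whose irredundant private set $Priv_{IR}(A,a_j)$ is designed so that a valid $D$ must avoid dominating it; this forces at least one literal of $C_j$ to be selected consistently with its truth value in $D$. The correctness of the reduction then follows: an irredundant extension of $A$ exists iff one can pick such a $D$ in every component, which iff translates into a satisfying assignment of $\varphi$.

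The main obstacle will be ensuring that the resulting graph is simultaneously chordal and $P_9$-free. Chordality can be enforced by building each gadget out of cliques glued along cutvertices or shared subcliques. The $P_9$-free condition is much more delicate: we have only eight vertices of budget in any induced path, while naive \textsc{3-SAT}-to-graph reductions are prone to creating long induced paths when tracing variable--clause--variable links. A plausible design is to attach each clause vertex $a_j$ so that it is complete to its local gadget, and to realize each variable gadget as a short chordal piece of bounded diameter, with long-range interactions routed through the vertices of $A$. One then carries out a case analysis on where a hypothetical induced $P_9$ could live in $G$: if it lies entirely inside $G[C]$ it contradicts $P_5$-freeness of the component; if it crosses vertices of $A$, universality together with chordality---in the spirit of the arguments used in Propositions~\ref{prop:Na-connected} and~\ref{prop:Na-complete}---bounds its length below nine.
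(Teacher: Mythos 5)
Your proposal correctly handles membership in {\sf NP} and correctly identifies the overall strategy (a reduction from \textsc{3SAT}, with the observation via Proposition~\ref{prop:Pk-Pk-4} that any irredundant component must be $P_5$-free chordal, and with the constraints expressed through Corollary~\ref{cor:P9-IR-characterization}). But the hardness direction is a plan, not a proof: no gadgets are actually constructed, and the two issues you yourself flag as ``the main obstacle'' --- that the final graph be chordal and $P_9$-free, and that an induced $P_9$ cannot arise from variable--clause--variable links --- are exactly where all the work lies. The paper resolves them with a concrete construction: the literals, their copies $u_i,\neg u_i$ and the clause vertices $c_j$ form a split graph $H$ (split graphs are $2K_2$-free, hence $P_5$-free), which is then verified to be a single irredundant component; short pendant paths $x_iy_iz_i$ and paws $a_ib_iv_iw_i$ attached to $u_i,\neg u_i$ create the redundant vertices and their private neighborhoods. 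Without an explicit construction and a verification of $P_9$-freeness, chordality, and of which vertices end up in $RN(G)$, the reduction is not established.

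There is also a logical flaw in the one concrete design decision you do commit to. You encode a clause $C_j$ by a redundant vertex $a_j$ whose private set $Priv_{IR}(A,a_j)$ must \emph{not} be dominated, and claim this ``forces at least one literal of $C_j$ to be selected.'' An avoidance constraint is purely negative: it can forbid $D$ from containing certain vertices, but it can never force $D$ to contain one. The covering constraint ``pick a literal of each clause'' must come from a \emph{domination} requirement, which is how the paper does it: each $c_j$ is an irredundant vertex in $IR(G)\setminus N(A)$ that the extension must dominate, and its only usable neighbors are the literals it contains (the clique is off-limits because selecting any clique vertex would dominate all the $u_i,\neg u_i$ and steal every private neighbor of the $v_i$'s). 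The ``must not dominate $Priv_{IR}(A,v_i)=\{u_i,\neg u_i\}$'' constraints serve only to forbid the clique and to forbid taking both $x_i$ and $\neg x_i$. You would need to invert the roles of your clause gadget accordingly before the reduction could be made to work.
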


\begin{proof}
	First notice that \textsc{IEP} belongs to {\sf NP}: a polynomial certificate is given by an irredundant set $I\subseteq IR(G)$ such that $A\cup I\in \D(G)$, and which can be verified in polynomial time.

	Given an instance $\varphi$ of \textsc{3SAT} with variables $x_1,\dots,x_n$ and clauses $C_1,\dots,C_m$, we construct a $P_9$-free chordal graph $G$ and a set $A\subseteq RN(G)$ such that $A$ admits an irredundant extension if and only if there exists a truth assignment of the variables of $\varphi$ that satisfies all the clauses.
	In the following, we assume that the degenerate cases where a literal intersects every clause, where two clauses are equal, or where the number of variables and clauses is lesser than three are excluded.
	Then, the construction is the following.

	The first part concerns the construction of a split graph $H$ which contains one vertex for each of the literals $x_i$ and $\neg x_i$, a copy $u$ and $\neg u_i$ of such literals, and one vertex $c_j$ per clause $C_j$.
	The graph induced by the $u_i$'s, $\neg u_i$'s and $c_j$'s is completed into a clique, while an edge is added between $u_i$ and $x_i$, between $\neg u_i$ and $\neg x_i$, and between a literal $x_i$ (resp.~$\neg x_i$) and a clause $c_j$ whenever the literal is contained into that clause.
	As for the second part, it consists of a pendant path $x_iy_iz_i$ and $\neg x_i\neg y_i\neg z_i$ rooted at every literal $x_i$ and $\neg x_i$, and of a paw $a_ib_iv_iw_i$ (a triangle $a_iv_iw_i$ with a pendant edge $a_ib_i$) made adjacent to both $u_i$ and $\neg u_i$ only through $v_i$, for every $i\in \intv{1}{n}$.
	The construction is illustrated in Figure~\ref{fig:NPC}.
	It can be easily seen that the obtained graph $G$ is $P_9$-free chordal.
	Also that a $P_8$ is induced by $b_ia_iv_iu_iu_jv_ja_jb_j$ for $i\neq j\in\intv{1}{n}$.

	\begin{figure}
		\center
		\caption{The construction of $G$ in Theorem~\ref{thm:NPC}. Irredundant vertices are represented in black while redundant ones are in white. The vertices in the rectangle induce a clique and $H$ a split graph.}
		\includegraphics{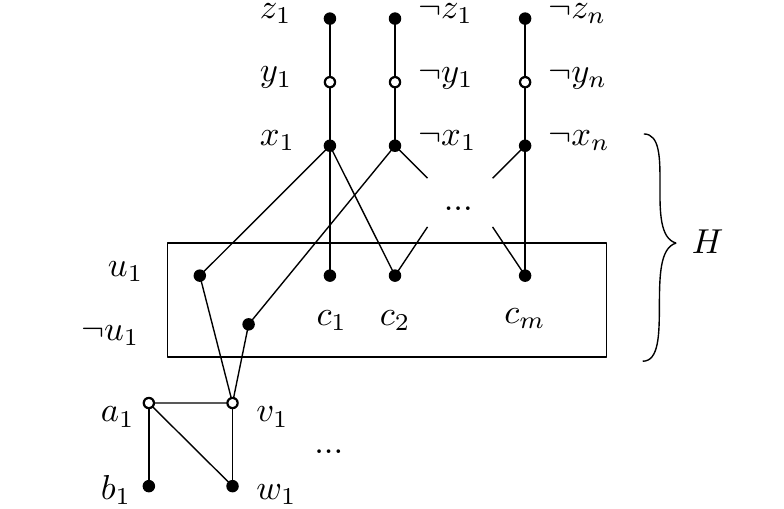}
		\label{fig:NPC}
	\end{figure}

	Let us show that $H=G[C]$ for an irredundant component $C$ of $G$.
	First note that every vertex outside of $H$ has a neighbor that is not adjacent to $H$, so it cannot make a vertex from $H$ redundant.
	Now if a vertex of $H$ makes another one of $H$ redundant then it cannot be a literal or some $u_i$, $\neg u_i$ (as it has either $y_i$, $\neg y_i$ or $v_i$ as a neighbor outside of $H$).
	Also, it cannot be a clause as by assumption every two clauses differ on a literal, and no literal is complete to the clique.
	Hence vertices of $H$ are all irredundant.
	It is easily seen that irredundant components of $G$ include $\{z_i\}$, $\{\neg z_i\}$, $\{b_i\}$ and $\{w_i\}$ for all $i\in\intv{1}{n}$.
	Also that redundant vertices of $G$ are $a_i$'s, $v_i$'s, $y_i$'s and $\neg y_i$'s.
	We conclude that $H$ cannot be extended, hence that $C$ is indeed an irredundant component of $G$, as claimed.
	
	Now, let us set $A=RN(G)$ and show that $A\in \D_{RN}(G)$ if and only if there exists a truth assignment of the variables of $\varphi$ that satisfies all the clauses.
	If $A\in \D_{RN}(G)$ then there exists an irredundant extension $D\in \DIR(A)$.
	Observe that only the $c_i$'s are to be dominated by $D$, i.e., $IR(G)\setminus N(A)=\{c_1,\dots,c_m\}$.
	However, $D$ does not intersect any element of the clique of $H$ as otherwise it would dominate $\{u_i,\neg u_i\}$ and thus steal all the private neighbors of the $v_i$'s.
	For the same reason, it cannot contain one literal and its negation.
	Consequently $D$ corresponds to a truth assignment of the variables of $\varphi$ that satisfies all the clauses.
	Consider now any truth assignment of the variables of $\varphi$ that satisfies all the clauses, and $D$ the associated set of vertices in $G$.
	By construction $D$ dominates all the $c_i$'s.
	Furthermore it does not steal any private neighbor to any vertex of $A$.
	By Corollary~\ref{cor:IR-private} we have that $A\in \D_{RN}(G)$ concluding the proof.
\end{proof}

\bibliography{main}

\end{document}